\numberwithin{equation}{section}
\newcommand{\dif}{\mathrm{d}}
\newtheorem{theorem}{Theorem}
\newtheorem{proposition}{Proposition}[section]
\newtheorem{lemma}{Lemma}[section]
\newtheorem{remark}{Remark}[section]
\newtheorem{definition}{Definition}[section]
\newtheorem{example}{Example}[section]
\newcolumntype{L}{>{\RaggedRight\arraybackslash}X}
\newcolumntype{C}[1]{>{\Centering\arraybackslash%
        \hsize=#1\hsize\linewidth=\hsize}X}
\newcommand{\tildeb}[1]{#1}
\newcommand{\Circle}{\mathrm{circle}}
\newcommand{\Bounce}{\mathrm{bounce}}
\providecommand{\keywords}[1]
{
  \small	
  \textbf{\textit{Keywords---}} #1
}
\title{Haar-Weave-Metropolis kernel}
\author[1]{Kengo Kamatani\thanks{kamatani@ism.ac.jp}}
\author[2]{Xiaolin Song\thanks{Corresponding Author: songxl@sigmath.es.osaka-u.ac.jp}}
\affil[1]{Institute of Statistical Mathematics}
\affil[2]{Graduate School of Engineering Science, Osaka University}
\date{}
\begin{document}

\maketitle

\begin{abstract}
Recently, many Markov chain Monte Carlo methods have been developed with deterministic reversible transform proposals inspired by the Hamiltonian Monte Carlo method. The deterministic transform is relatively easy to reconcile with the local information (gradient etc.) of the target distribution.  However, as the ergodic theory suggests, these deterministic proposal methods seem to be incompatible with robustness and lead to poor convergence, especially in the case of target distributions with heavy tails.  On the other hand, the Markov kernel using the Haar measure is relatively robust since it learns global information about the target distribution introducing global parameters. However, it requires a density preserving condition, and many deterministic proposals break this condition. In this paper, we carefully select deterministic transforms that preserve the structure and create a Markov kernel, the Weave-Metropolis kernel, using the deterministic transforms. By combining with the Haar measure, we also introduce the Haar-Weave-Metropolis kernel. In this way, the Markov kernel can employ the local information of the target distribution using the deterministic proposal, and thanks to the Haar measure, it can employ the global information of the target distribution.  Finally, we show through numerical experiments that the performance of the proposed method is superior to other methods in terms of effective sample size and mean square jump distance per second.
\end{abstract}

\keywords{Markov chain, Hamiltonian Monte Carlo, Haar measure, Bayesian inference}

\section{Introduction}

The fundamental object of Bayesian statistics is the posterior distribution, and all statistical inference is performed by integrating a given quantity with respect to the posterior distribution. Therefore, the evaluation of the integral is the main obstacle in Bayesian statistics, and Bayesian computational methods have been developed for this purpose.  The Markov chain Monte Carlo method, especially, the Metropolis--Hastings kernel has been the gold standard for Bayesian computation for the last thirty years. However, its efficiency seems to be diminishing due to the arise of modern big data with complex dependent models. 
Most Bayesian computational methods work efficiently for high-dimensional models in the 1990 sense, but have difficulty in the modern era. Bayesian computation needs a breakthrough to keep pace with change.

Recently, the Metropolis--Hastings kernel based on deterministic reversible transform have been developed such as \cite{murray2010, nice_mcmc, levy2017, zhang2018theory}. 
Typical, and the earliest example is the  Hamilton Monte Carlo method \citep{DUANE1987216}. Reversible transform-based methods can use local information, that is, the gradients of the target probability density function. In addition, these methods can train the transition kernels, e.g., through the neural network strategy using the generated random sequences. However, due to the lack of global information, reversible transform-based methods can lead to poor performance because the local information is sometimes less informative, e.g., for  heavy-tailed target distributions. 

On the other hand, it is possible to use global information of the target distribution by introducing a global parameter into the Metropolis--Hastings kernel. The global parameters are estimated from random samples. If there is a sufficient information, we can plug in this information to the estimator.  However, this is not a robust strategy because a poor estimate leads directly to poor convergence.  It is advisable to set a non-informative prior as it leads to robust choice. In this purpose, it is natural to use the Haar measure for the prior distribution of the parameter. The usefulness of Haar measure has been analysed such as \cite{Liu_1999, 10.1093/biomet/87.2.353, MR3231603, Shariff2015, MR3668488, Kamatani_2018}. 
This strategy sometimes improves the performance of Markov chain Monte Carlo drastically. The improvement is theoretically proved for a specific kernel in terms of ergodic property \citep{MR3668488} and high-dimensional convergence speed \citep{Kamatani_2018}. In this paper, we will provide a general theory (Theorem \ref{thm:uniform_ergodicity}) that partly explain the benefit of the use of Haar measure.

Therefore, on the one hand, there is an efficient method based on the reversible transform, which takes into account the local information of the target distribution. On the other hand, there is a method based on the Haar measure that uses global information. It is natural to consider a combination of local and global information methods to solve complicated problems. However, this is not an easy task because, at least in our framework, the Haar measure-based methods require that the transform is measure preserving respect to a probability measure. In other words, it should satisfy a density preserving condition (see the paragraph after Definition \ref{def:ahm}). The locally-informed reversible transform usually destroys this structure. In this paper, we carefully select reversible transforms that satisfy the density preserving condition. To be more precise, we uses circle transform and bounce transform as reversible transform-based updates.

As in the Hamiltonian Monte Carlo kernel, the reversible transform is defined for the variable $(x,v)$, where $x$ is the state variable and $v$ is an auxiliary variable. If we focus on the behaviour of $x$, then the path of $x$ evolves with the elliptic motion induced by the circular transform. After a certain time, the bounce forces $x$ to follow another ellipse. Because of this property, the path looks like a weaving behaviour. For this reason, we call this transform the Weaving transform. See Figure \ref{fig::path} for a typical behaviour of the path of the Weave transform.  

Thanks to the bounce transform going in the opposite direction to the gradient, the Weaving transform does not change the corresponding potential energy (negative log-likelihood) as much.  This is similar to the Hamiltonian flow, which does not change the value of the Hamiltonian as much. In addition, the Weave transform does not change the distance from the origin. This avoids an unfavourable behaviour for the super-light target distribution \citep [see, e.g., Theorem 4.2 of][]{RT2}. 
The Weave transform itself is not sufficient to move throughout the state space. 
This local behaviour, reinforced by the Haar motion, traverses the state space quite well. 

\begin{figure}%
    \centering
    {{\includegraphics[width=0.4\textwidth]{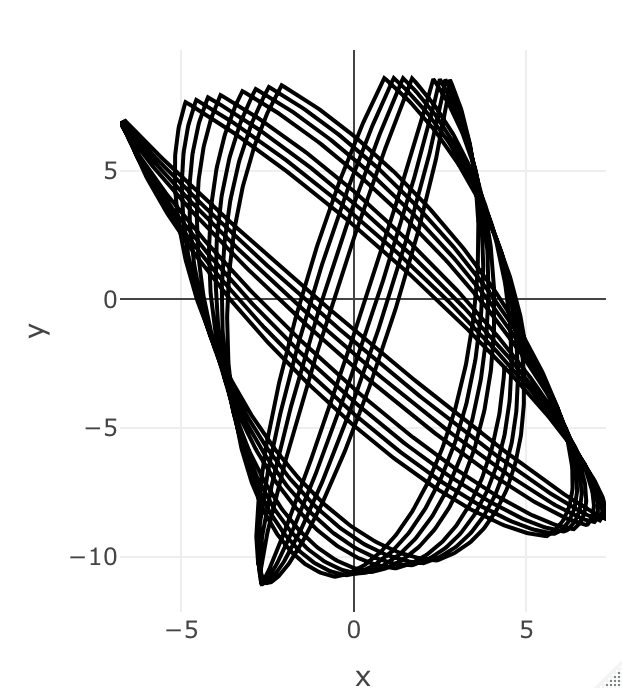} }}%
    \qquad
    {{\includegraphics[width=0.47\textwidth]{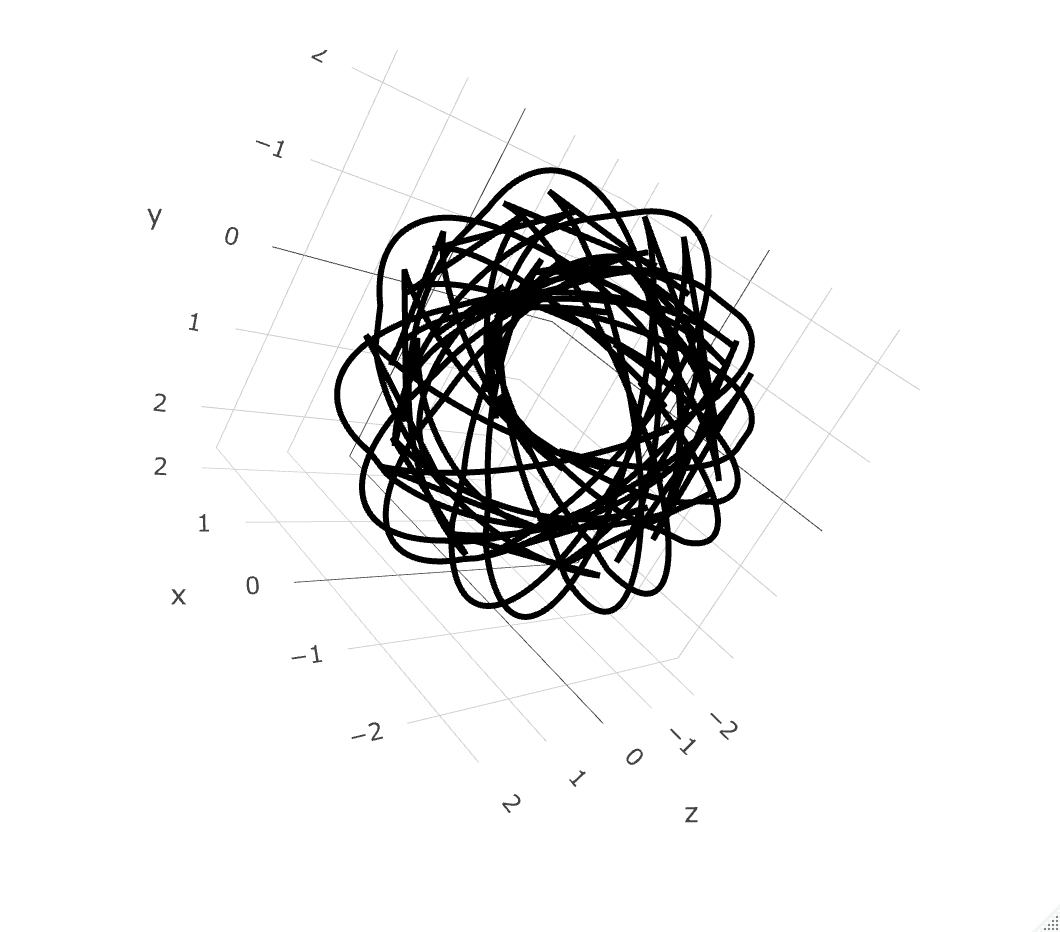} }}%
    \caption{Both the left and right panels show the trace of the Weave transform  with step number $L=40$. The left panel is target on a $2-$dimensional Student $t$-distribution and the right panel is toward on $3-$dimensional Student $t$-distribution.}%
    \label{fig::path}
\end{figure}



The paper is organised as follows. Section 2 gives a brief overview of reversible Markov kernel and
augmented Metropolis--Hastings kernel. In Section 3, we introduce the Haar measure and the Haar mixture Metropolis kernel. In addition, we also explain the advantage of using heavy-tailed reference measure. In Section 4, we propose the Weave-Metropolis and Haar-Weave-Metropolis algorithms that use the Weave transform. Section 5 studies the limit behaviour of the weave processes. In Section 6, we compare the proposed algorithms with other Markov chain Monte Carlo methods through some numerical experiments.

\subsection{Notation}
Write $\mathcal{N}_d(\mu,\Sigma)$ for the $d$-dimensional normal distribution with mean $\mu$ and variance $\Sigma$. Write $\mathcal{N}(\mu,\Sigma)$ when $d=1$. Similarly, we write $\mathcal{G}(\nu,\alpha)$ for the gamma distribution with shape parameter $\nu$ and rate parameter $\alpha$, and we write $\mathcal{B}e(\alpha_1,\alpha_2)$ 
for the Beta distribution with shape parameters $\alpha_1$ and $\alpha_2$. Let $\mathcal{T}_{d}(\nu, \mu,\Sigma)$ be a $d$-dimensional multivariate Student $t$-distribution with mean (shift parameter) $\mu$, variance (scale parameter) $\Sigma$ and degree of freedom $\nu$.

For a vector $x=(x_1,\ldots, x_d)\in\mathbb{R}^d$, $|x|=(\sum_{i=1}^d x_i^2)^{1/2}$ is the Euclidean norm. If $A$ is a $d\times d$-square matrix, $|A|=\sup_{|x|=1, x\in\mathbb{R}^d}|Ax|$. 
$I_d$ is the $d\times d$-identity matrix. 

\section{Reversible transform based Metropolis--Hastings kernel}

\label{subsec:hmc_hug}

A Markov kernel $Q$ on $(E,\mathcal{E})$ is called $\mu$-reversible for a measure $\mu$ if
$$
\int_A\mu(\dif x)Q(x,B)=\int_B\mu(\dif x)Q(x,A)
$$
for any $A, B\in\mathcal{E}$. 
The Metropolis kernel, defined as follows, is $\Pi$-reversible  where $\Pi(\dif x)=\exp(-U(x))\mu(\dif x)$ is a probability measure. 
The definition here is somewhat broader than usual, which is useful for our purpose. We want to define the Metropolis kernel as a subclass of the Metropolis--Hastings kernel such that the acceptance rate can be written as a fraction of the target densities. 

\begin{definition}[Metropolis kernel]\label{def:metropolis}
Let $\mu$ be a $\sigma$-finte measure. 
For a $\mu$-reversible Markov kernel $Q$ on $(E,\mathcal{E})$, a Markov kernel $P$ defined by
\begin{align*}
    P(x,A)=\delta_x(A)\left(1-\int_{y\in E} Q(x,\dif y)\alpha(x,y)\right)+\int_A Q(x,\dif y)\alpha(x,y)
\end{align*}
is called the Metropolis kernel, where $\delta_x$ is the Dirac measure on $x\in E$, and 
\begin{equation}\label{eq:acceptance_probability}
    \alpha(x,y)=\min\left\{1,\exp(-U(y)+U(x))\right\}. 
\end{equation}
The Markov kernel $Q$ is called the proposal kernel, the function $\alpha$ is called the acceptance probability, and the measure $\mu$ is called the reference measure for the Metropolis kernel. 
\end{definition}

We call any one-to-one measurable mapping $\Phi$ from $E$ to itself a transform. Furthermore, we call a transform gradient-based if it uses gradient information of the target density. For $L\in\mathbb{N}$, we write
$$
\Phi^L(x)=\overbrace{\Phi\circ\cdots\circ\Phi}^{L}(x). 
$$
We introduce a Metropolis kernel based on a transform. 
We call $\Phi$ $\mu$-measure preserving if
$$
\mu(\{x\in E:\Phi(x)\in A\})=\mu(A)
$$
for $A\in\mathcal{E}$. We also call $\Phi$
$\mu$-reversible transform if 
\begin{equation}\label{eq:reversible}
\mu(\{x\in A:\Phi(x)\in B\})=
\mu(\{x\in B:\Phi(x)\in A\})
\end{equation}
for $A, B\in\mathcal{E}$. 
Note that every $\mu$-reversible transform $\Phi$ is $\mu$-measure preserving, and if the $\mu$-measure preserving transform $\Phi$ satisfies the condition $\Phi\circ\Phi(x)=x$, then it is $\mu$-reversible. For the construction of a Metropolis kernel based on $\mu$-measure preserving transform, we need a slightly more general version of reversibility, flipped reversibility, which is sometimes called skew reversibility.  

\begin{definition}[Flipped reversible  transform]\label{def:semi-reversible-flow}
Let $\mu$ be a probability measure on $(E,\mathcal{E})$. 
Let $\kappa:E\rightarrow E$ be a $\mu$-measure preserving transform such that $\kappa\circ\kappa(x)=x\ (x\in E)$. 
We call $\Phi:E\rightarrow E$  $(\mu,\kappa)$-reversible transform if $\kappa\circ\Phi$ is $\mu$-reversible.   
\end{definition}

The properties of the flipped reversible transform are explored in Section \ref{sec:reversibility}. 
We will design a Metropolis--Hastings kernel, that uses a flipped reversible transform. 
Let us consider an extended state space $(E^2,\mathcal{E}^{\otimes 2})$ so that the construction of the flipped reversible transform becomes simpler in practice. Let 
$$
\mu^{\otimes 2}(\dif x,\dif v)=\mu(\dif x)\mu(\dif v),\ 
\mu\otimes\lambda(\dif x,\dif v)=\mu(\dif x)\lambda(\dif v). 
$$

\begin{definition}[Augmented Metropolis--Hastings kernel]
\label{def:mh_augmented}
Let $\kappa(x,v)=(x,-v)$. 
Let $\Phi:E^2\rightarrow E^2$ be $(\mu^{\otimes 2 },\kappa)$-reversible transform. 
Let $\Pi(\dif x)=\exp(-U(x))\mu(\dif x)$ and $\lambda(\dif v)=\exp(-K(v))\mu(\dif v)$ be  probability measures on $(E,\mathcal{E})$ such that $K(v)=K(-v)$. The augmented Metropolis--Hastings kernel is a Markov kernel on $(E,\mathcal{E})$ defined by 
\begin{equation}
\label{eq:metropolis_hastings_augmented_var}
\begin{split}
    P(x,A)&=\delta_x(A)\left(1-\int_{v\in E}\lambda(\dif v)
    \alpha((x,v),\Phi(x,v))\right)\\
    &\quad+\int_{v\in E} 1(\{\Phi(x,v)\in A\times E\})~\lambda(\dif v)
    \alpha((x,v),\Phi(x,v))
    \end{split}
\end{equation}
with the acceptance probability
$$
\alpha((x,v),(x^*,v^*))=\min\left\{1,\exp(-H(x^*, v^*)+H(x, v))\right\} 
$$
where the function $H(x,v)$, the so-called Hamiltonian, is defined by 
$$
H(x,v)=U(x)+K(v). 
$$
\end{definition}

See Section \ref{sec:reversibility} for the proof of $\Pi$-reversibility of $P$.  
The Hamiltonian Monte Carlo kernel belongs to this class. In this case $\mu$ is the Lebesgue measure on $(\mathbb{R}^d,\mathcal{B}(\mathbb{R}^d))$ and $\lambda$ is a normal distribution.  For $h>0$, the Hamiltonian transform $\Phi(x,v)=(x_h,v_h)$ is defined by the solution $(x_t,v_t)_{t\ge 0}$ of
$$
\frac{\dif x}{\dif t}=\frac{\partial H}{\partial v}, \quad
\frac{\dif v}{\dif t}=-\frac{\partial H}{\partial x}. 
$$
The Hamiltonian flow does not change the value of the Hamiltonian. Thanks to this property together with the $(\mathrm{Leb},\kappa)$-reversible property, the Hamiltonian transform is 
$(\Pi\otimes\lambda,\kappa)$-reversible and the acceptance probability is always $1$ (see Proposition \ref{prop:reversible_flow}). 
In practice, however, we need a discretisation, and the leap-frog approximation is commonly used.

\begin{example}[Leap-frog approximated Hamiltonian Monte Carlo kernel]\label{ex:leap_frog}
Let $h>0$. 
Suppose $U:\mathbb{R}^d\rightarrow\mathbb{R}$ and $K:\mathbb{R}^d\rightarrow\mathbb{R}$ are differentiable. 
The leap-frog approximated transform $\phi_h(x,v)=(x_h,v_h)$ is defined by
\begin{equation*}
\begin{cases}
 v_{h/2}&=v-\frac{h}{2}\nabla U(x) \\
    x_h&=x+h~\nabla K(v_{h/2})\\
    v_h&=v_{h/2}-\frac{h}{2}\nabla U(x_{h}). 
\end{cases}\label{eq:label2}
\end{equation*}
In practice, the iterated transform $\Phi(x,v)=(\phi_h)^L(x,v)$ is used, where $L\in\mathbb{N}$. 
The approximated transform no longer preserves the Hamiltonian and is therefore not $(\Pi\otimes\lambda,\kappa)$-reversible. However, it is still $(\mathrm{Leb},\kappa)$-reversible. Therefore, we can construct an augmented Metropolis--Hastings kernel where the underlying measure $\mu$ is the Lebesgue measure on $(\mathbb{R}^d,\mathcal{B}(\mathbb{R}^d))$. 
\end{example}
\begin{example}[Infinite dimensional Hamiltonian Monte Carlo kernel]
\label{ex:idhmc}
A version of the leap-frog transform, which replaces the middle step with the circular transform (Definition \ref{def:circular}) and replaces $U(x)$ by $U(x)+|x|^2/2$, is also $(\mathrm{Leb},\kappa)$-reversible. The corresponding Metropolis--Hastings kernel is called the infinite dimensional Hamiltonian Monte Carlo  kernel \citep{MR2858447, MR2822774}. 
Because of the similarity to the kernel we will introduce, we will focus on this kernel later in Section \ref{sec:simulation}. 
\end{example}

Now let us look at another closely related kernel, The Hug kernel introduced in \cite{ludkin2019hug}. The core for the kernel is the bounce transform.
Let  $\xi:\mathbb{R}^d\rightarrow \mathbb{R}^d$ be a vector-valued function. Let $\mathcal{N}=\{x\in\mathbb{R}^d:\xi(x)=0\}$. For any $x\in\mathcal{N}^c$, let
\begin{equation}\label{eq:bounce}
\overline{\xi}(x)=\frac{\xi(x)}{|\xi(x)|}. 
\end{equation}
The function $\xi(x)$ is indeterminate if  $x\in\mathcal{N}$. 

\begin{definition}(\textbf{bounce transform})
\label{def:bounce}
For any $x,v\in\mathbb{R}^d$, consider an operation
\begin{equation}
R(x)~v= v-2(\overline{\xi}(x)^\top v)\overline{\xi}(x)=(I-2\overline{\xi}(x)\overline{\xi}(x)^{\top})v
\label{bounce}
\end{equation}
if $x\notin\mathcal{N}$ and set $R(x)v=-v$ if $x\in\mathcal{N}$. 
We call $\Phi_{\Bounce}(x,v)=(x,R(x)v)$ the bounce transform. 
More generally, for a $d$-dimensional vector $M$ with a symmetric positive definite matrix $\Sigma$, we consider an operation 
$$
R(x\mid M, \Sigma)~v=M+\left(I-2\frac{\Sigma\xi(x)\xi(x)^{\top}}{\xi(x)^{\top}\Sigma\xi(x)}\right)(v-M)
$$
and set $\Phi_{\Bounce}(x,v\mid M, \Sigma)=(x, R(x\mid M, \Sigma)v)$. 
We have $\xi(x)^{\top}(R(x\mid M, \Sigma)v-M)=-\xi(x)^{\top}(v-M)$. In particular, $\xi(x)^{\top}R(x)v=-\xi(x)^{\top}v$. 
\end{definition}

The bounce transform reflects the velocity $v$ with respect to the hyper-plane orthogonal to $\xi(x)$. The bounce transform has been used in the Monte Carlo literature, for example in \cite{MR1994729, Peters:2012aa, 10.1093/biomet/asab013, ludkin2019hug}. Note that the general bounce operation $R(x\mid M,\Sigma)$ and the corresponding transform $\Phi_{\Bounce}(x,v\mid M,\Sigma)$ are used only in the simulation section. In the rest of the paper, for simplicity, we consider only $R(x)$ and $\Phi_{\Bounce}$.

 \begin{example}[Hug kernel]\label{ex:hug}
Let $h>0$. Assume that $U$ and $K$ are differentiable. 
The Hug transform $\phi_h(x,v)=(x_h,v_h)$ is defined by
\begin{equation*}
\begin{cases}
 x_{h/2}&=x+\frac{h}{2}\nabla K(v) \\
    v_h&=R(x_{h/2})v_{h/2}\\
    x_{h}&=x_{h/2}+\frac{h}{2}\nabla K(v_h). 
\end{cases}
\end{equation*}
This transform is $(\mathrm{Leb},\kappa)$-reversible. The augmented Metropolis--Hastings kernel using $\Phi=(\phi_h)^L$ will be referred to as the Hug kernel for $L\in\mathbb{N}$.  
\end{example}

Under suitable conditions, the transform $\Phi$ changes the potential energy $U(x)$ on the order of $h^2$ when $h\rightarrow 0$ \citep{ludkin2019hug}. On the other hand, the transform lacks the ability to efficiently traverse the state space, so it is slow to converge to equilibrium. In \cite{ludkin2019hug}, a different Markov kernel, called the Hop kernel, was introduced to support traveling in the state space.

\section{Haar mixture Metropolis kernel}

\subsection{Haar masure and Haar mixture kernel}
\label{sec:haar}

In this section, we recall the Haar mixture Metropolis kernel discussed in \cite{kamatani2020} that is a trick improving convergence of Metropolis kernels. 
Let $(G,\times)$ be a locally compact topological group with a Borel $\sigma$-algebra. We also assume that the topological group is unimodular, that is, the left Haar measure and the right Haar measure conicdes up to a multiplicative constant.  Let $\nu$ be the  Haar measure. The  Haar measure satifies $\nu(H)=\nu(gH)$ for every $g\in G$ and every Borel set $H$ of $G$. 
Let $E$ be an abelian topological group with a Borel $\sigma$-algebra $\mathcal{E}$. 
We assume that $E$ is a $G$-module, i.e., there is an operation $(g,x)\mapsto gx$ such that $ga+gb=g(a+b)$. A typical example of the pair $G$ and $E$ is $\mathbb{R}$ and $\mathbb{R}^d$ with a scalar multiplication $(a,x)\mapsto ax$ as the left group action. 

Let $\mu$ be a probability measure and 
let $Q(x,\dif y)$ be a $\mu$-reversible Markov kernel on $(E,\mathcal{E})$. 
Let $\mu_g(A)=\mu(gA)$ and 
let $Q_g(x,A)=Q(gx,gA)$. 
Let 
$$
\mu_*(A)=\int\mu_g(A)\nu(\dif g). 
$$
Assume that $\mu_*$ is $\sigma$-finite. Observe that $\mu_*$ is a left invariant measure. Let 
$$
K(x,\dif g)=\frac{\dif \mu_g}{\dif \mu_*}(x)\nu(\dif g). 
$$
By Fubini's lemma, $K(x,G)=1$ for $\mu_*$-a.s. $x\in E$. But for simplicity, we assume $K(x,G)=1\ (x\in E)$.

\begin{definition}[Haar mixture Metropolis kernel]
Let $\Pi(\dif x)=\exp(-U(x))\mu_*(\dif x)$. 
The Haar mixture Metropolis kernel is a Metropolis kernel with proposal kernel
$$
Q_*(x,A)=\int_{g\in G}K(x,\dif g)Q_g(x, A)
$$
with acceptance probability (\ref{eq:acceptance_probability}). 
The Makov kernel $Q_*$ is $\mu_*$-reversible. 
\end{definition}

\begin{example}[Autoregressive kernel]
\label{ex:ar-intro}
Let $h\in [0,2\pi)$.  Consider a Markov kernel
\[
Q(x,\cdot)=\mathcal{N}_d(\cos h~x, (\sin h)^2
I_d)
\]
on $\mathbb{R}^d$.
Simple calculation yields that it is $\mu=\mathcal{N}_d(0, I_d)$-reversible. 

Let  $G=(\mathbb{R}_+,\times)$ and set $(g,x)\mapsto g^{1/2}x$. Then the Haar measure is $\nu(\dif g)~\propto~ g^{-1}\dif g$. A simple calculation yields $\mu_g=\mathcal{N}_d(0, g^{-1}I_d)$ and  
$Q_g(x,\cdot)=\mathcal{N}_d(\cos h~x, g^{-1}(\sin h)^2 I_d)$. Also, 
$\mu_*(\dif x)~\propto~ |x|^{-d}\dif x$ and
$K(x,\dif g)=\mathcal{G}(d/2, |x|^2/2)$. 
We have a closed form (up to a constant) of expression of $Q_*(x,\cdot)$ as follows: 
\begin{align*}
    Q_*(x,\mathrm{d}y)\propto \left[1+\frac{|y-\cos h~x|^2}{(\sin h)^2~|x|^2}\right]^{-d}\dif x. 
\end{align*}
See \cite{kamatani2020} for the detail. 
\end{example}

\subsection{Benefit of using heavy-tailed reference measure}

In this section, we provide a result that explains in part the importance of the heavy-tailed reference measure (Definition \ref{def:metropolis}). More precisely, for uniform ergodicity, we show that $\mu$ must be heavier than the target distribution $\Pi$ so that the density is bounded from above.  
As explained in Section \ref{sec:haar}, Haar mixture can make the reference measure heavier. This explains in part why Haar mixture improves performance. 

Suppose that the measurable space $(E,\mathcal{E})$ is countably generated. By this condition, there exists a small set of positive $\Pi$-probability \cite[see Section 5.2 of ][]{MT}. 
A Markov kernel $P$ is uniformly ergodic if there exists a probability measure $\Pi$ such that
\begin{equation}
    \nonumber
    \sup_{x\in E}\sup_{A\in\mathcal{E}}\left|P^m(x,A)-\Pi(A)\right|\longrightarrow_{m\rightarrow\infty}0. 
\end{equation}
This theorem is a kind of generalisation of Theorem 2.1 of \cite{MT2}. 

\begin{theorem}\label{thm:uniform_ergodicity}
Suppose that $\mu$ is an atomless $\sigma$-finite measure, and $Q(x,\dif y)=q(x,y)\mu(\dif y)$ is a $\mu$-reversible Markov kernel with a jointly measurable density $q(x,y)$. 
Suppose that a probability measure $\Pi$ has a density $\pi(x)$ with respect to $\mu$ such that
$$
\mu\operatorname{-ess\ sup}\pi(x)=+\infty. 
$$
Then the Metropolis kernel $P$ is not uniformly ergodic. 
\end{theorem}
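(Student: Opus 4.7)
The plan is to rule out uniform ergodicity by exhibiting a single measurable set whose $P^n$-mass from appropriate starting states stays too small to approximate $\Pi$. I will take $A=\{\pi\leq M\}$ with $M$ chosen large enough that $\Pi(A)>0$; this is possible because $\pi$ is $\mu$-integrable, so $\Pi(A)\uparrow 1$ as $M\uparrow\infty$.

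The engine of the argument is a Lyapunov bound for $v(x)=1/\pi(x)$, namely $Pv\leq 2v$ pointwise on $\{\pi>0\}$. This follows by unpacking
\[
Pv(x)=(1-a(x))v(x)+\int q(x,y)\min\left(\frac{1}{\pi(y)},\frac{1}{\pi(x)}\right)\mu(\dif y)
\]
and bounding $\min(1/\pi(y),1/\pi(x))\leq 1/\pi(x)$, so that the integral is at most $v(x)\int q(x,y)\mu(\dif y)=v(x)$. Only the Markov property of $Q$ is used here; the $\mu$-reversibility enters the story just through the Metropolis acceptance formula. Iterating yields $P^n v\leq 2^n v$, and because $\mathbf{1}_A\leq Mv$ pointwise on $\{\pi>0\}$, this produces the key estimate $P^n(x,A)\leq 2^n M/\pi(x)$ for every $n$ and every $x$ with $\pi(x)>0$.

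The hypothesis $\mu\text{-ess sup}\,\pi=\infty$ then forces $\sup_{x\in E}\pi(x)=\infty$ for any version of $\pi$, since $\{\pi>K\}$ has positive $\mu$-measure, hence is non-empty, for every $K$. Therefore $\inf_{x\in E}P^n(x,A)=0$ for each fixed $n$, and so
\[
\sup_{x\in E}\|P^n(x,\cdot)-\Pi\|_{\mathrm{TV}}\geq \Pi(A)-\inf_{x\in E}P^n(x,A)=\Pi(A)>0
\]
for every $n$, contradicting uniform ergodicity. The main conceptual step is identifying $v=1/\pi$ as the right Lyapunov function; once that is in place, the rest is mechanical.
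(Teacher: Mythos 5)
Your proof is correct, and it takes a genuinely different route from the paper's. The paper argues by contradiction through the small-set (minorization) characterization of uniform ergodicity from Meyn--Tweedie: assuming $P^M(x,\dif y)\ge c\,\xi(\dif y)$ for all $x$, it bounds $P^M\le I+Q_*\alpha$ with $Q_*=\sum_{m=1}^M\binom{M}{m}Q^m$, uses atomlessness of $\mu$ to discard the Dirac part, shows $\xi$ has a (truncatable, bounded) density with respect to $\mu$, and reaches a contradiction by sending $\pi(x_n)\to\infty$ and applying Markov's inequality to the density $q_*$. You replace all of this machinery with the one-line drift bound $Pv\le 2v$ for $v=1/\pi$, valid because $\alpha(x,y)v(y)=\min\{1/\pi(y),1/\pi(x)\}\le v(x)$ and $Q(x,\cdot)$ is a probability measure; iterating gives $P^n(x,\{\pi\le M\})\le 2^n M/\pi(x)$, so for each fixed $n$ the total-variation distance to $\Pi$ from states deep in the tail is at least $\Pi(\{\pi\le M\})>0$, and the sup over $x$ never decays. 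What your approach buys: it is more elementary (no appeal to Theorem 16.0.2 of Meyn--Tweedie, no irreducibility or aperiodicity considerations) and strictly more general, since you never use the atomlessness of $\mu$ or the existence of a jointly measurable density $q(x,y)$ for $Q$ --- those hypotheses are artifacts of the paper's proof technique rather than of the conclusion. One point you gloss over, however: the paper defines uniform ergodicity as uniform convergence to \emph{some} probability measure, whereas your bound rules out uniform convergence to the target $\Pi$ specifically. To close this gap, add the standard observation that a uniform total-variation limit is necessarily invariant and coincides with every invariant probability measure, and that $\Pi$ is invariant for $P$; this last fact is where the $\mu$-reversibility of $Q$ genuinely enters (via $\Pi$-reversibility of $P$), not merely ``through the acceptance formula'' as you claim. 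With that one sentence added, your proof is complete.
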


\begin{proof}
Suppose now that $P$ is uniformly ergodic. Then Markov kernel is $\Pi$-irreducible and also aperiodic  \cite[see Theorem 5.4.4 of ][]{MT}. 
Also, there exists $c>0$ and a probability measure $\xi$ such that 
\begin{equation}\label{eq:uniform-ergodicity}
P^M(x,\dif y)\ge c~\xi(\dif y)\ (x\in E)
\end{equation}
for some $M\in\mathbb{N}$  \cite[Theorem 16.0.2 of][]{MT}.  By construction, the Metropolis kernel $P$ satisfies 
$$
P(x,\dif y)\le I(x,\dif y)+Q(x,\dif y)\alpha(x,y)
$$
where $I(x, A)=\delta_x(A)$ and $\alpha(x,y)=\min\{1, \pi(y)/\pi(x)\}$. Observe that $\alpha(x,y)\alpha(y,z)\le \alpha(x,z)$.  
Therefore
\begin{align*}
    P^M(x,\dif y)&\le (I+Q\alpha)^M(x,\dif y)\\
    &=I(x,\dif y)+\sum_{m=1}^M\binom{M}{m}(Q\alpha)^m(x,\dif y)\\
    &\le I(x,\dif y)+\sum_{m=1}^M\binom{M}{m}Q^m(x,\dif y)\alpha(x,y)=:I(x,\dif y)+Q_*(x,\dif y)\alpha(x,y). 
\end{align*}
Observe that $Q_*(x, E)=\sum_{m=1}^M\binom{M}{m}=2^M-1$. 
For each $x\in E$,  $P^M(x,\dif y)$ has a unique atom at the current state $x$ and $P^M(x,\dif z)$ and $P^M(y,\dif z)$ does not share an atom if $x\neq y$. Therefore, $\xi$ does not have an atom, and we have
\begin{equation}
Q_*(x,\dif y)\alpha(x,y)\ge c~\xi(\dif y). 
\label{eq:small_set}
\end{equation}
Since $Q^m(x,\dif y)$ is absolutely continuous with respect to $\mu$ for each $x\in E$, the probability measure $\xi$ is also absolutely continuous and it has a density function $h(x)=\dif\xi/\dif\mu(x)$. Observe that without loss of generality,  we can assume that $h(x)$ is bounded above since (\ref{eq:uniform-ergodicity}) holds for a truncated version of $h$, that is, $P^M(x,\dif y)\ge c_K~\xi^K(\dif y)$ where the probability measure $\xi^K$ is  $\xi^K(\dif y)=\min\{K,h(y)\}\mu(\dif y)/\int \min\{K,h(z)\}\mu(\dif z)$ for $c_K=c~\int \min\{K,h(z)\}\mu(\dif z)$ for some $K>0$ such that $c_K>0$.   

Let $q_*(x,y)$ be the density of $Q_*(x,\dif y)$ with respect to $\mu$.   
The left-hand side of (\ref{eq:small_set}) is 
$$
q_*(x,y)\mu(\dif y)\alpha(x,y)
=q_*(x,y)\Pi(\dif y)\min\left\{\frac{1}{\pi(y)},\frac{1}{\pi(x)}\right\}. 
$$
For any $N>0$, set $N(x)=\{y\in E:q_*(x,y)\le N\}$. 
By assumption, there exists $x_n\in E\ (n=1,2,\ldots)$ such that 
$\pi(x_n)\rightarrow \infty$. Then 
$$
\int_{y\in N(x_n)} q_*(x_n,y)\Pi(\dif y)\min\left\{\frac{1}{\pi(y)},\frac{1}{\pi(x_n)}\right\}\le \frac{N}{\pi(x_n)}\longrightarrow_{n\rightarrow\infty} 0. 
$$
Therefore, by the inequality (\ref{eq:small_set}), 
\begin{equation}\label{eq:uniform-ergodic-convergence}
    \xi(N(x_n))\longrightarrow_{n\rightarrow\infty} 0~\Longrightarrow~\xi(N(x_n)^c)\rightarrow_{n\rightarrow\infty} 1. 
\end{equation}
On the other hand, by Markov's inequality, we have
\begin{align*}
\mu(N(x_n)^c)\le \int_E\frac{q_*(x_n,y)}{N}\mu(\dif y)= \frac{Q_*(x_n, E)}{N}=\frac{2^M-1}{N}
\end{align*}
and hence
\begin{align*}
\xi(N(x_n)^c)=\int_{N(x_n)^c}h(y)\mu(\dif y)\le \|h\|_\infty\mu(N(x_n)^c)\le \|h\|_\infty\frac{2^M-1}{N} 
\end{align*}
where $\|h\|_\infty=\sup_{x\in E}|h(x)|$. 
This contradicts the convergence (\ref{eq:uniform-ergodic-convergence}) when $N$ is greater than $\|h\|_\infty~(2^M-1)$. Therefore, $P$ is not uniformly ergodic. 
\end{proof}

We do not investigate further ergodic properties of our new kernels that will be explained in Section \ref{sec:wmh}. The exponential ergodicity of the new kernels is an interesting topic, but it exceeds the scope of our paper.

\section{Weave kernels as combination of transform-based and Haar mixture kernels}

\subsection{Augmentation and the Haar measure}

In this section, we introduce a new kernel that combines a flipped reversible transform and the Haar mixture Metropolis kernel.  The new kernel is intended to have a locally and globally informed. To this end, we consider a slightly simpler version of Definition \ref{def:mh_augmented}. We call this simpler version Metropolis kernel, not Metropolis--Hastings kernel,  since the acceptance rate depends only on the fraction of target densities.

\begin{definition}[Augmented Metropolis kernel]
\label{def:amk}
Let $\mu$ be a probability measure. 
Let $\Phi:E^2\rightarrow E^2$ be a $(\mu^{\otimes 2},\kappa)$-reversible transform. Let $\Pi(\dif x)=\exp(-U(x))\mu(\dif x)$ be a probability measure on $(E,\mathcal{E})$. The augmented Metropolis kernel is a Metropolis kernel on $(E,\mathcal{E})$ with the proposal kernel
\begin{equation}
    Q(x,A)=\mu(\{w\in E:\Phi(x,w)\in A\times E\})\ (x\in E, A\in\mathcal{E})
    \label{eq:metrpolis_with_augmented_var}
\end{equation}
with acceptance probability (\ref{eq:acceptance_probability}). 
\end{definition}

Based on this Metropolis kernel, we construct a Haar mixture version. 

\begin{definition}[Augmented Haar--Metropolis kernel]
\label{def:ahm}
Let $\mu$ be a probability measure, and 
let $\Phi:E^2\rightarrow E^2$ be a $(\mu_g^{\otimes  2},\kappa)$-reversible transform for any $g\in G$. Let $\Pi(\dif x)=\exp(-U(x))\mu_*(\dif x)$ be a probability measure on $(E,\mathcal{E})$. The augmented Haar metropolis kernel is a Metropolis kernel on $(E,\mathcal{E})$ with the proposal kernel
\begin{equation}
    Q_*(x,A)=\int_GK(x,\dif g)\mu_g(\{w\in E:\Phi(x,w)\in A\times E\})\ (x\in E, A\in\mathcal{E})
    \label{eq:haar_metrpolis_with_augmented_var}
\end{equation}
with acceptance probability (\ref{eq:acceptance_probability}).
\end{definition}

In Definitions \ref{def:amk} and \ref{def:ahm}, proposal kernel should be measure preserving with respect to a probability measure, say, $\mu$. This assumption is crucial. There are many transforms that are Lebesgue measure preserving. These transforms can be $\mu$-measure preserving if the transform does not change the value of the density (see Proposition \ref{prop:reversible_flow}-4). We call that Lebesgue-measure preserving transform satisfies the \textbf{density preserving condition} if it does not change the value of the density.  However, most gradient-based transforms do not satisfy this condition. An important exception is the bounce transform, whose kernel will form the basis for our new Metropolis kernels.


\subsection{Weave-Metropolis kernel}
\label{sec:wmh}

We are in the position to introduce two new kernels, which use a bounce transform, and a circle transform introduced by the following definition.

\begin{definition}{(\textbf{Circle transform})}\label{def:circular} For $x\in \mathbb{R}^d$, $v\in \mathbb{R}^d$, and $h \in [0,2\pi)$, the circle transform denoted by
$(x^*,v^*)=\Phi_{\Circle}(x,v)$ is defined by 
\begin{align*}
x^*&\leftarrow x\cos h + v \sin h\\
v^*&\leftarrow -x\sin h +v\cos h. 
\end{align*}
More generally, for a $d$-dimensional vector $M$, a circle transform with parameter $M$ is denoted by 
$(x^*,v^*)=\Phi_{\Circle}(x,v\mid M)$ and defined by
\begin{align*}
x^*&\leftarrow M+(x-M)\cos h + (v-M) \sin h\\
v^*&\leftarrow M-(x-M)\sin h + (v-M)\cos h. 
\end{align*}
\end{definition}


The circle transform is $(\mathcal{N}_d(0,\Sigma)^{\otimes 2},\kappa)$-reversible for $\kappa(x,v)=(x,-v)$ and has been used as a proposal kernel of Metropolis--Hastings kernels \citep{MR1723510,MR2444507, murray2010, Bierkens2020TheBS}.  
The transform naturally fits the Gaussian prior  distribution used in many statistical problems. 





Based on the circle transform, we would like to construct an efficient transform that uses local information about the potential energy. It is possible to introduce global information into the circle transform, as in \cite{LAW2014127, Rudolf_2016, CUI2016109}. However, the circle transform is blind to the local information of potential energy. Therefore, we need to combine another transform to fulfill our purpose, i.e., to connect with the local information. 
Hamiltonian flow is commonly used to introduce local information. However, we do not use this strategy because Hamiltonian flow does not satisfy the density preserving condition with respect to the normal distribution. 

In this work, we use the bounce transform to employ the local information of the target distribution, and introduce the Weave transform: 
\begin{equation}
\label{map:flow}
\phi_h(z)=(\Phi_{\mathrm{circle}}\circ \Phi_{\mathrm{bounce}}\circ \Phi_{\mathrm{circle}})(z). 
\end{equation}
The transform is similar to the Hug kernel in Example \ref{ex:hug}. The only difference is that it replaces the shift transform with the circle transform. Thanks to this difference, the Weave transform always preserves the distance from the origin. 
This transform satisfies the density preserving condition and also keeps the value of the potential energy approximately constant. The latter assertion is described in Section \ref{sec:limit}.  
The \textbf{Weave-Metropolis kernel} is the Metropolis kernel using the Weave transform and is considered a discrete-time version of the boomerang sampler proposed in \cite{Bierkens2020TheBS}. 
\begin{definition}[Weave-Metropolis kernel]
Let $L\in\mathbb{N}$ and $h>0$, and let $\kappa(x,v)=(x,-v)$. Let $\Pi(\dif x)=\exp(-U(x))\mu(\dif x)$ be a probability measure on $\mathbb{R}^d$ with $\mu=\mathcal{N}_d(0,I_d)$. The Weave-Metropolis kernel is an augmented Metropolis kernel with $\Phi=(\phi_h)^L$. 
\end{definition}

We also introduce the Haar mixture version of the Weave-Metropolis kernel. We will see that the Haar mixture version, the \textbf{Haar-Weave-Metropolis } kernel, performs much better for many target probability distributions. 

\begin{definition}[Haar-Weave-Metropolis kernel]
Let $L\in\mathbb{N}$ and $h>0$, $\kappa(x,v)=(x,-v)$, and let $\mu=\mathcal{N}_d(0, I_d)$. Let $\Pi(\dif x)=\exp(-U(x))\mu_*(\dif x)$ be a probability measure on $\mathbb{R}^d$ with $\mu_*(\dif x)=|x|^{-d}\dif x$. The Haar-Weave-Metropolis kernel is an augmented Haar--Metropolis kernel on $\mathbb{R}^d$ with $\Phi=(\phi_h)^L$. 
In this case,  $K(x,\dif g)=\mathcal{G}(d/2, |x|^2/2)$. 
\end{definition}

The step-by-step description of the algorithms using Weave-Metropolis and Haar-Weave-Metropolis kernels are described in Algorithms \ref{al:wm} and \ref{al:hwm}.

\subsection{Reversibility of Metropolis kernels}

\label{sec:reversibility}

In this section, we further investigate the flipped reversible transforms. 
We would like to remind the reader here that, 
if the absolute value of the Jacobian determinant of $\psi:\mathbb{R}^d\rightarrow\mathbb{R}^d$ is $1$ and one-to-one, then $\psi$ is Lebesgue measure preserving. It follows that, 
the bounce transform (Definition \ref{def:bounce}) and the circle transform (Definition \ref{def:circular}) are Lebesgue measure preserving.

\begin{proposition}\label{prop:reversible_flow}
Let $\mu$ and $\nu$ be $\sigma$-finite measures on $(E,\mathcal{E})$ such that $\nu\ll \mu$. Let $p(x)=\dif\nu/\dif \mu(x)$. 
Let $\kappa:E\rightarrow E$ be a $\mu$-reversible transform such that $\kappa\circ\kappa(x)=x$. 
\begin{enumerate}
    \item 
If the $\mu$-measure preserving transform $\psi$ satisfies the condition $\kappa\circ\psi\circ\kappa\circ\psi(x)=x$ in $\mu$-a.s., then $\psi$ is a $(\mu,\kappa)$-reversible transform. 
\item If $\psi$ is $(\mu,\kappa)$-reversible, then $\psi^L$ is $(\mu,\kappa)$-reversible for $L\in\mathbb{N}$. 
\item If $\psi_1, \psi_2$ are $(\mu,\kappa)$-reversible transforms, then $\psi_1\circ\psi_2\circ\psi_1$ is also $(\mu,\kappa)$-reversible. 
\item If $\psi$ is $(\mu,\kappa)$-reversible and if $p(x)=p(\kappa(x))$ and $p(x)=p(\psi(x))$ in $\mu$-a.s., then $\psi$ is also $(\nu,\kappa)$-reversible. 
\end{enumerate}
\label{lemma:leb-rev}
\end{proposition}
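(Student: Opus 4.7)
The plan is to reduce all four parts to the auxiliary equivalence: a transform $\Phi:E\to E$ is $\mu$-reversible if and only if $\Phi$ is $\mu$-measure preserving and $\Phi\circ\Phi = \mathrm{id}$ in $\mu$-a.s. The forward direction, which is what I will need for Parts 2 and 3, is obtained by first setting $B=E$ in (\ref{eq:reversible}) to get measure preservation, and then substituting $B = \Phi^{-1}(C)$ and applying the identity twice to obtain $\mu(A\cap\Phi^{-2}(C)) = \mu(A\cap C)$ for all $A,C\in\mathcal{E}$, which forces $\Phi^2=\mathrm{id}$ $\mu$-a.s. The converse is a one-line change of variables using the involution property.

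With this in hand, Part 1 is immediate: $\Phi := \kappa\circ\psi$ is $\mu$-measure preserving as a composition of two $\mu$-measure preserving maps, and $\Phi\circ\Phi=\mathrm{id}$ $\mu$-a.s.\ by hypothesis. For Parts 2 and 3, I would first extract the conjugation identity $\kappa\circ\psi\circ\kappa = \psi^{-1}$ $\mu$-a.s.\ from $(\kappa\circ\psi)^2 = \mathrm{id}$ together with $\kappa^2=\mathrm{id}$, and then insert $\kappa\circ\kappa$ between consecutive factors to deduce $\kappa\circ\psi^L\circ\kappa = \psi^{-L}$ for Part 2 and $\kappa\circ(\psi_1\circ\psi_2\circ\psi_1)\circ\kappa = (\psi_1\circ\psi_2\circ\psi_1)^{-1}$ for Part 3. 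In each case the composed transform is $\mu$-measure preserving, and its $\kappa$-conjugate coincides $\mu$-a.s.\ with its inverse, so its square with $\kappa$ prepended is the identity and Part 1 applies.

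For Part 4, set $\Phi = \kappa\circ\psi$. The two hypotheses on $p$ combine to give $p\circ\Phi = p$ $\mu$-a.s. Writing $\nu = p\cdot\mu$ and using that $\Phi$ is a $\mu$-measure preserving involution,
\begin{align*}
\nu(A\cap\Phi^{-1}(B))
&= \int 1_A(x)\,1_B(\Phi(x))\,p(x)\,\mu(\dif x)\\
&= \int 1_A(\Phi(x))\,1_B(\Phi^2(x))\,p(\Phi(x))\,\mu(\dif x)\\
&= \int 1_A(\Phi(x))\,1_B(x)\,p(x)\,\mu(\dif x) = \nu(B\cap\Phi^{-1}(A)),
\end{align*}
which is the $\nu$-reversibility of $\kappa\circ\psi$, i.e.\ the $(\nu,\kappa)$-reversibility of $\psi$.

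The only real subtlety, and hence the main obstacle, is that the involution identity $(\kappa\circ\psi)^2 = \mathrm{id}$ and the derived conjugation identities hold only $\mu$-a.s., so one must be careful not to overinterpret them as pointwise equalities when forming $\psi^L$ or $\psi_1\circ\psi_2\circ\psi_1$. Since reversibility is itself a measure-theoretic notion that ignores $\mu$-null sets, this causes no harm; all other steps are routine composition identities and a single change of variables.
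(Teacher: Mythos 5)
Your Part~1 is fine and is essentially the paper's own argument (reduce to: measure preserving plus a.s.\ involution implies reversible). The genuine gap is in the auxiliary equivalence on which you hang Parts~2, 3 and~4. Reversibility of $\Phi$ does give, as you compute, $\mu(A\cap\Phi^{-2}(C))=\mu(A\cap C)$ for all $A,C\in\mathcal{E}$, but this only says that $\Phi^{2}$ acts as the identity on the \emph{measure algebra}, i.e.\ $\mu\bigl(\Phi^{-2}(C)\,\triangle\,C\bigr)=0$ for every $C\in\mathcal{E}$; it does not force $\Phi\circ\Phi=\mathrm{id}$ $\mu$-a.s.\ on a general measurable space, which is the setting of the proposition. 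Concretely, take $E=\{0,1,2\}$ with the trivial $\sigma$-algebra $\mathcal{E}=\{\emptyset,E\}$, $\mu$ uniform, and $\Phi$ the cyclic shift $i\mapsto i+1 \bmod 3$: then $\Phi$ is measurable, $\mu$-measure preserving and $\mu$-reversible (there are only two sets to check), yet $\Phi\circ\Phi(x)\neq x$ for \emph{every} $x$. (A less degenerate example: $E=\{0,1,2\}\times\mathbb{R}$ with the countably generated $\sigma$-algebra $\{\{0,1,2\}\times B\}$ and $\Phi$ cycling the first coordinate; so countable generation alone does not rescue the step --- one needs $\mathcal{E}$ to separate points, e.g.\ a standard Borel structure.) Since your Parts~2 and~3 run through the pointwise conjugation identity $\kappa\circ\psi\circ\kappa=\psi^{-1}$ (which additionally presumes that the injective map $\psi$ is onto with measurable inverse, again a Borel-space fact), and your Part~4 uses $1_B(\Phi^2(x))=1_B(x)$ under the integral, all three inherit this gap.

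The gap is avoidable, and the paper's proof shows how: it never converts reversibility into a pointwise statement. For Parts~2--3 it iterates a purely set-level identity,
$$
\mu(\{x\in A:\kappa\circ\phi\circ\psi(x)\in B\})=\mu(\{\kappa\circ\psi\circ\kappa(x)\in A:\kappa\circ\phi(x)\in B\}),
$$
obtained by applying the reversibility of $\kappa\circ\psi$ and of $\kappa$ to preimage sets; for Part~4 it uses the bivariate change-of-variables form of reversibility, $\int g(x,\Phi(x))\,\mu(\dif x)=\int g(\Phi(x),x)\,\mu(\dif x)$ with $g(x,y)=1_A(x)1_B(y)p(x)$, which needs no involution identity at all. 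If you add the hypothesis that $(E,\mathcal{E})$ is standard Borel --- which covers every use the paper makes of the proposition, namely $E=\mathbb{R}^d\times\mathbb{R}^d$ --- then your equivalence is true, your reduction of everything to Part~1 is valid (and you can sidestep the inverse issue by writing the conjugation identity as $\psi\circ\kappa\circ\psi=\kappa$ a.s.), and the resulting proof is arguably cleaner than the paper's. But as a proof of the proposition as stated, for arbitrary $(E,\mathcal{E})$ and $\sigma$-finite $\mu$, it does not go through.
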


\begin{proof}
\begin{enumerate}
    \item By substituting $\kappa\circ\psi$ for $\psi$, it suffices to show that $\psi\circ\psi(x)=x$ implies the $\mu$-reversibility of $\psi$ when $\psi$ is  $\mu$-measure preserving. However, for any $A, B\in\mathcal{E}$ we have.
\begin{align*}
\mu(\{x\in A:\psi(x)\in B\})
&=\mu(\{\psi\circ\psi (x)\in A:\psi(x)\in B\})\\
&=\mu(\{\psi (x)\in A:x\in B\}). 
\end{align*}
\item  For any transform $\phi$ with a $(\mu,\kappa)$-reversible transform $\psi$, the following holds: 
\begin{align*}
\mu(\{x\in A:\kappa\circ\phi\circ\psi(x)\in B\})
&=\mu(\{x\in A:\kappa\circ\phi\circ\kappa\circ\kappa\circ\psi(x)\in B\})\\\
&=\mu(\{\kappa\circ\psi(x)\in A:\kappa\circ\phi\circ\kappa(x)\in B\})\\
&=\mu(\{\kappa\circ\psi\circ\kappa(x)\in A:\kappa\circ\phi(x)\in B\}). 
\end{align*}
Applying this equation sequentially, we obtain
\begin{align*}
\mu(\{x\in A:\kappa\circ\psi^L(x)\in B\})&=
\mu(\{\kappa\circ\psi^L\circ\kappa(x)\in A:\kappa(x)\in B\})\\\
&=
\mu(\{\kappa\circ\psi^{L}(x)\in A:x\in B\}). 
\end{align*}
\item 
A similar argument as above proves the assertion. 
\item By construction we have
\begin{align*}
\nu(\{x\in A:\kappa\circ\psi(x)\in B\})
&=\int_E1(\{x\in A:\kappa\circ\psi(x)\in B\})p(x)\mu(\dif x)\\\
&=\int_E1(\{\kappa\circ\psi(x)\in A:x\in B\})p(\kappa\circ\psi(x))\mu(\dif x)\\\
&=\int_E1(\{\kappa\circ\psi(x)\in A:x\in B\})p(x)\mu(\dif x)\\
&=\nu(\{\kappa\circ\psi(x)\in A:x\in B\}). 
\end{align*}
\end{enumerate}
\end{proof}

\begin{proposition}\label{prop:metropolis_hastings_augmented_var}
Augmented Metropolis--Hastings kernel $P$ in Definition  \ref{def:mh_augmented} is $\Pi$-reversible. 
\end{proposition}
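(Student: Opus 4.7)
The plan is to decompose $P(x,A) = R(x)\delta_x(A) + T(x,A)$ where
\[
T(x,A) = \int_E \lambda(\dif v)\, \mathbf{1}_{A \times E}(\Phi(x,v))\, \alpha((x,v), \Phi(x,v))
\]
is the accepted-move contribution and $R(x) = 1 - T(x,E)$ is the rejection mass. The diagonal piece contributes $\int_{A \cap B} R(x)\, \Pi(\dif x)$ to $\int_A \Pi(\dif x) P(x,B)$, which is trivially symmetric in $A \leftrightarrow B$, so reversibility reduces to verifying
\[
\int_A \Pi(\dif x)\, T(x,B) = \int_B \Pi(\dif x)\, T(x,A).
\]

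I would then lift the question to the augmented space. Writing $z = (x,v)$, $\Pi \otimes \lambda(\dif z) = e^{-H(z)} \mu^{\otimes 2}(\dif z)$, and using the standard Metropolis identity $e^{-H(z)} \alpha(z, \Phi(z)) = \min\{e^{-H(z)}, e^{-H(\Phi z)}\} =: m(z, \Phi z)$, where $m$ is symmetric in its two arguments, the claim becomes
\[
\int \mathbf{1}_{A \times E}(z)\, \mathbf{1}_{B \times E}(\Phi z)\, m(z,\Phi z)\, \mu^{\otimes 2}(\dif z) = \int \mathbf{1}_{B \times E}(z)\, \mathbf{1}_{A \times E}(\Phi z)\, m(z,\Phi z)\, \mu^{\otimes 2}(\dif z).
\]

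The core of the argument rests on three small observations acting together. First, since $\kappa(x,v) = (x,-v)$ only flips the $v$-coordinate, the cylinder sets $A \times E$ and $B \times E$ are $\kappa$-invariant; in particular $\mathbf{1}_{B \times E}(\Phi z) = \mathbf{1}_{B \times E}(\kappa \circ \Phi(z))$. Second, the assumption $K(v) = K(-v)$ gives $H \circ \kappa = H$, hence $m(z, \Phi z) = m(z, \Psi z)$ with $\Psi := \kappa \circ \Phi$. Third, by hypothesis $\Phi$ is $(\mu^{\otimes 2}, \kappa)$-reversible, which by Definition \ref{def:semi-reversible-flow} means $\Psi$ is $\mu^{\otimes 2}$-reversible. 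Substituting the first two observations turns the left-hand side into $\int F(z, \Psi z)\, \mu^{\otimes 2}(\dif z)$ with $F(z_1,z_2) = \mathbf{1}_{A \times E}(z_1) \mathbf{1}_{B \times E}(z_2) m(z_1,z_2)$. The standard monotone class extension of the reversibility of $\Psi$ then gives $\int F(z, \Psi z)\, \mu^{\otimes 2}(\dif z) = \int F(\Psi z, z)\, \mu^{\otimes 2}(\dif z)$, and once I undo the two $\kappa$-substitutions using the symmetry of $m$, the right-hand side emerges.

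The only non-routine point is keeping track of the $\kappa$-flips: the reason everything cancels is that the indicator factors test only the $x$-coordinate and are therefore $\kappa$-invariant. This is precisely why working with the $E$-marginal $P$ (rather than demanding the stronger $\mu^{\otimes 2}$-reversibility of $\Phi$ itself) is enough to obtain $\Pi$-reversibility; I would flag this in the write-up so that the role of $\kappa$ does not look accidental.
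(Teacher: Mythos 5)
Your proposal is correct and follows essentially the same route as the paper: reduce to the off-diagonal (accepted-move) part, lift to the augmented space where $\Pi\otimes\lambda(\dif z)=e^{-H(z)}\mu^{\otimes 2}(\dif z)$, exploit the $\kappa$-invariance of cylinder sets $A\times E$ and of $H$ (via $K(v)=K(-v)$) to trade $\Phi$ for $\kappa\circ\Phi$, and then invoke the $\mu^{\otimes 2}$-reversibility of $\kappa\circ\Phi$. Your function $F$ is exactly the paper's $\beta$, and your explicit remark about the monotone class extension only makes precise a step the paper takes implicitly.
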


\begin{proof}
It suffices to show that $\widehat{P}$ is $\Pi$-reversible, where
$$
\widehat{P}(x,A):=\int_{v\in E} 1(\{\Phi(x,v)\in A\times E\})~\lambda(\dif v)
    \alpha((x,v),\Phi(x,v)). 
$$
By construction we have $\Pi(\dif x)\lambda(\dif v)=e^{-H(z)}\mu^{\otimes 2}(\dif z)$ and 
\begin{align*}
    \int_A\Pi(\dif x)\widehat{P}(x,B)
    &=\int_{(x,v)\in A\times E}\Pi(\dif x) 1(\{\Phi(x,v)\in B\times E\})~\lambda(\dif v)
    \alpha((x,v),\Phi(x,v))\\
    &=\int\mu^{\otimes 2}(\dif z)\beta(z,\Phi(z))
\end{align*}
where 
$$
\beta(z,z')=1(\{z\in A\times E,\ z'\in B\times E\})~\min\{e^{-H(z)},e^{-H(z')}\}. 
$$
Since $K(v)=K(-v)$, we have $\beta(z,\Phi(z))=\beta(z,\kappa\circ\Phi(z))$. On the other hand, by $(\mu^{\otimes 2},\kappa)$-reversibility of $\Phi$, 
\begin{align*}
    \int_A\Pi(\dif x)\widehat{P}(x,B)
    &=\int\mu^{\otimes 2}(\dif z)\beta(z,\kappa\circ\Phi(z))\\
    &=\int\mu^{\otimes 2}(\dif z)\beta(\kappa\circ\Phi(z),z)\\
    &=\int\mu^{\otimes 2}(\dif z)\beta(\Phi(z),z)
    =\int_B\Pi(\dif x)\widehat{P}(x,A). 
\end{align*}
Thus the claim follows. 
\end{proof}

\begin{remark}
Let $\phi=\Phi_{\Bounce}$ or $\phi=\Phi_{\Circle}$. 
For $\kappa(x,v)=(x,-v)$,  we have $\kappa\circ\phi\circ\kappa\circ\phi(x,v)=(x,v)$. 
Since $\phi$ is Lebesgue measure preserving map, it is $(\mathrm{Leb},\kappa)$-reversible flow. Also, since $|z|=|\kappa(z)|=|\phi(z)|$, it is 
$(\mu_g^{\otimes 2},\kappa)$-reversible for $\mu_g=\mathcal{N}_d(0,g^{-1} I_d)$. Finally, $\phi_h$ in (\ref{map:flow}) is $(\mu_g^{\otimes 2},\kappa)$-reversible. 
\end{remark}

\begin{proposition}
The proposal kernels $Q$ and $Q_*$ defined in Definitions \ref{def:amk} and \ref{def:ahm} are $\mu$ and $\mu_*$-reversible respectively. In particular, the augmented Metropolis and augmented Haar--Metropolis  kernels are $\Pi$-reversible. 
\end{proposition}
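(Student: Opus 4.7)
The plan is to reduce the statement to a single algebraic identity, namely the $\mu$-reversibility of $Q$, and then obtain $\mu_*$-reversibility of $Q_*$ by averaging over $g$ and obtain $\Pi$-reversibility of the Metropolis kernels as a standard by-product of the acceptance-probability calculation in Definition \ref{def:metropolis}.

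First I would prove that $Q$ in (\ref{eq:metrpolis_with_augmented_var}) is $\mu$-reversible. For any $A,B\in\mathcal{E}$ I rewrite
\begin{equation*}
\int_A \mu(\dif x)\,Q(x,B) \;=\; \mu^{\otimes 2}\bigl(\{z\in A\times E : \Phi(z)\in B\times E\}\bigr).
\end{equation*}
The crucial observation is that because $\kappa(x,v)=(x,-v)$ acts as the identity on the $x$-coordinate, the product sets $A\times E$ and $B\times E$ are $\kappa$-invariant. Hence the event $\{\Phi(z)\in B\times E\}$ coincides with $\{\kappa\circ\Phi(z)\in B\times E\}$, and by the assumed $(\mu^{\otimes 2},\kappa)$-reversibility of $\Phi$ (i.e.\ the $\mu^{\otimes 2}$-reversibility of $\kappa\circ\Phi$ applied to the sets $A\times E$ and $B\times E$) we may swap the roles of $A$ and $B$ to obtain $\int_B\mu(\dif x)\,Q(x,A)$.

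Next, for $Q_*$ I would apply the identical argument with $\mu$ replaced by $\mu_g$, which is legitimate since $\Phi$ is assumed $(\mu_g^{\otimes 2},\kappa)$-reversible for every $g\in G$. This gives, for the corresponding $Q_g(x,A)=\mu_g(\{w:\Phi(x,w)\in A\times E\})$, the identity $\int_A\mu_g(\dif x)Q_g(x,B)=\int_B\mu_g(\dif x)Q_g(x,A)$. The $\mu_*$-reversibility of $Q_*$ then follows by integrating this identity against the Haar measure $\nu$ and using the disintegration $\mu_*(\dif x)K(x,\dif g)=\mu_g(\dif x)\nu(\dif g)$ that is built into the definition of $K$ in Section \ref{sec:haar}; by Fubini the order of integration in $\int\mu_*(\dif x)K(x,\dif g)Q_g(x,A)$ can be exchanged freely.

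Finally, to deduce that the augmented Metropolis and augmented Haar--Metropolis kernels themselves are $\Pi$-reversible, I would simply invoke the standard Metropolis calculation: whenever the proposal is $\mu$-reversible (resp.\ $\mu_*$-reversible) and the target has density $\exp(-U)$ with respect to $\mu$ (resp.\ $\mu_*$), detailed balance reduces to checking that $\exp(-U(x))\mu(\dif x)Q(x,\dif y)\alpha(x,y)=\min\{e^{-U(x)},e^{-U(y)}\}\mu(\dif x)Q(x,\dif y)$ is symmetric in $(x,y)$, which follows immediately from the reversibility of the proposal. The only point that required any work was the first step; the potential subtlety, namely that $\Phi$ is only flipped-reversible rather than reversible, is exactly what is resolved by the $\kappa$-invariance of product sets of the form $A\times E$.
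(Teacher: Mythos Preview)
Your proposal is correct and follows essentially the same route as the paper: the paper defers the case of $Q$ to the argument of Proposition~\ref{prop:metropolis_hastings_augmented_var}, whose core step is precisely your observation that $A\times E$ and $B\times E$ are $\kappa$-invariant so that flipped reversibility of $\Phi$ yields ordinary reversibility on such product sets; for $Q_*$ the paper likewise uses the disintegration $\mu_*(\dif x)K(x,\dif g)=\nu(\dif g)\mu_g(\dif x)$ to reduce to the $(\mu_g^{\otimes 2},\kappa)$-reversible case and then invokes the same argument. The final $\Pi$-reversibility is dismissed in the paper as obvious, exactly as you do.
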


\begin{proof}
The proof of $Q$ is essentially the same as that of Proposition \ref{prop:metropolis_hastings_augmented_var}. For $Q_*$, by the identity $\mu_*(\dif x)K(x,\dif g)=\nu(\dif g)\mu_g(\dif x)$, we have
\begin{align*}
    \int_A\mu_*(\dif x)Q_*(x, B)&=
    \int_A\mu_*(\dif x)\int_GK(x,\dif g)\mu_g(\{v\in E:\Phi(x,v)\in B\times E\})\\\
    &=
    \int_G\nu(\dif g)\int_A\mu_g(\dif x)\mu_g(\{v\in E:\Phi(x,v)\in B\times E\})\\
    &=
    \int_G\nu(\dif g)~\mu_g^{\otimes 2}(\{(x,v)\in A\times E,\ \Phi(x,v)\in B\times E\}). 
\end{align*}
The rest of the proof is also similar, since $\Phi$ is $(\mu_g^{\otimes 2},\kappa)$-reversible. The assertion for the Metropolis kernel is obvious. 
\end{proof}

\section{Limit of the Weave transform}

\label{sec:limit}

We will take a closer look at the Weave transform $\phi_h$ defined in (\ref{map:flow}). More precisely, we provide a short-time expansion of $\phi_h$ together with the limit process induced by the transform. These results give a good insight into the behaviour of the Weave-Metropolis kernel, which is different from similar Markov kernels such as the Hamiltonian Monte Carlo kernel.

Let $\xi=\nabla U$. For $\alpha\in\mathbb{R}^d$, let
$$
\mathcal{M}_\alpha=\{x\in\mathbb{R}^d:U(x)=\alpha\}
$$
be the level set of $U$. We will assume that $\mathcal{N}\cap \mathcal{M}_\alpha=\emptyset$ where $\mathcal{N}=\{x:\xi(x)=0\}$. In this case $\mathcal{M}_\alpha$ is a $d-1$-dimensional regular submanifold. 
For $\epsilon>0$ we introduce an $\epsilon$-perturbation of $\mathcal{M}_\alpha$ by
$$
\mathcal{M}_\alpha^\epsilon=\{x\in\mathbb{R}^d:\exists y\ \mathrm{s.t.}\ |x-y|\le \epsilon,\ |U(y)-\alpha|\le \epsilon\}. 
$$
Let $B_r=\{x\in\mathbb{R}^d:|x|\le r\}$.  
Consider a projection of $v$ to the tangential space of $\xi(x)$: 
$$
P(x)v=(I-\overline{\xi}(x)\overline{\xi}(x)^{\top})v
$$
for $v\in\mathbb{R}^d$ for $x\notin \mathcal{N}$.  
Moreover, let $Q(x)=I-P(x)=\overline{\xi}(x)\overline{\xi}(x)^{\top}$. From this notation we have 
\begin{equation}\label{eq:p_q_r}
R=P-Q,\ 
I=P+Q \quad\leadsto\quad R-I=-2Q,\ R+I=2P
\end{equation}
where $R(x)$ is as in Definition \ref{def:bounce}. 
Also, 
\begin{equation}
P(x)\overline{\xi}(x)=0,\ 
Q(x)\overline{\xi}(x)=\overline{\xi}(x).
\label{eq:p_and_q}
\end{equation}
Thus $P(x)Q(x)=Q(x)P(x)=0$. Observe that 
if $\xi$ is twice differentiable and if $x\notin\mathcal{N}$, then 
\begin{equation}
\label{eq:expression_of_dxi}
\partial\overline{\xi}(x)[u,v]=\sum_{i,j}\partial_{x_j}\overline{\xi}_i(x)u_iv_j=\frac{\partial\xi(x)[P(x)u,v]}{|\xi(x)|}. 
\end{equation}

First we show that the transform $\phi_h$ does not change the value of $U(x)$ so much. 
Let $p_X(x,v)=x$. See \cite{ludkin2019hug} for the same analysis for the Hug transform. 

\begin{lemma}\label{lem:deviation_of_u}
Let $\alpha\in\mathbb{R}, r, T>0$. 
Suppose that $\xi$ is continuously differentiable. 
Suppose that  $z=(x,y)\in\mathbb{R}^{d}\times\mathbb{R}^d$, 
satisfies $|z|\le r$. Then
\begin{align*}
    |U(p_X(\phi_h(z)))-U(x)|\le h^2~C(r)
\end{align*}
for any $h>0$, where $C(r)=r\sup_{x\in B_r}|\xi(x)| +r^2\sup_{x\in B_r}|\partial\xi(x)|$. 
\end{lemma}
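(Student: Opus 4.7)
The strategy is to decompose the Weave transform into its three sub-transforms, observe that the two circle rotations generate contributions that cancel against each other through the bounce identity, and control the residual by a second-order Taylor estimate.

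First I would unfold the definition of $\phi_h$. Write $(x_1, v_1) = \Phi_{\Circle}(x, y)$, $(x_1, v_2) = \Phi_{\Bounce}(x_1, v_1)$, and $(x_3, v_3) = \Phi_{\Circle}(x_1, v_2)$, so that $p_X(\phi_h(z)) = x_3$. Let $\Delta_1 = x_1 - x = x(\cos h - 1) + y \sin h$ and $\Delta_3 = x_3 - x_1 = x_1(\cos h - 1) + v_2 \sin h$, and split $U(x_3) - U(x) = [U(x_3) - U(x_1)] + [U(x_1) - U(x)]$.

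The key identity is $\xi(x_1)^{\top}(\Delta_1 + \Delta_3) = 0$. A direct computation (just expanding the second circle applied to $(x_1, v_1)$) shows $x_1 \cos h - v_1 \sin h = x$, hence $x_1(\cos h - 1) - v_1 \sin h = -\Delta_1$. Combining with the defining property of the bounce $\xi(x_1)^{\top} v_2 = -\xi(x_1)^{\top} v_1$ (which holds trivially when $x_1 \in \mathcal{N}$ since then $\xi(x_1)=0$), we get $\xi(x_1)^{\top} \Delta_3 = -\xi(x_1)^{\top} \Delta_1$.

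Next I would apply the first-order Taylor expansion with integral remainder around $x_1$ to each piece. Because $B_r$ is convex and the circle transform preserves the joint norm $|(x,v)|$ while the bounce preserves $|v|$, all three waypoints $x, x_1, x_3$ lie in $B_r$ and so do the two integration segments. This yields $|U(x_1) - U(x) - \xi(x_1)^{\top}\Delta_1| \le \tfrac{1}{2} \sup_{B_r}|\partial \xi| \cdot |\Delta_1|^2$ and the analogous bound for the second piece. Summing and using the cancellation,
\[
|U(x_3) - U(x)| \le \tfrac{1}{2} \sup_{B_r}|\partial \xi| \cdot (|\Delta_1|^2 + |\Delta_3|^2).
\]
Using $\cos h - 1 = -2\sin^2(h/2)$ and $\sin h = 2\sin(h/2)\cos(h/2)$, I would factor $\Delta_1 = 2\sin(h/2)[y\cos(h/2) - x\sin(h/2)]$; Cauchy--Schwarz gives $|y\cos(h/2) - x\sin(h/2)| \le (|x|^2 + |y|^2)^{1/2} \le r$, and $|2\sin(h/2)| \le h$, so $|\Delta_1| \le hr$. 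Since $(x_1, v_2)$ also has joint norm at most $r$, the same computation yields $|\Delta_3| \le hr$. Hence $|U(x_3) - U(x)| \le h^2 r^2 \sup_{B_r}|\partial \xi|$, which is in turn bounded by $h^2 C(r)$.

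The main obstacle is establishing the cancellation $\xi(x_1)^{\top}(\Delta_1 + \Delta_3) = 0$: this is the only place where the specific circle--bounce--circle geometry is used, and it rests on the elementary identity $x_1 \cos h - v_1 \sin h = x$ (which undoes one circle rotation) combined with the defining reflection property of the bounce step. Once this is in hand, the remaining work is routine Taylor remainders together with the norm bookkeeping supplied by the measure-preservation of the two primitive transforms.
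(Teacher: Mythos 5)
Your proof is correct, and it shares the paper's core structure: both split the increment at the bounce point and exploit the fact that the reflection kills the gradient component there. Indeed, your identity $\xi(x_1)^{\top}(\Delta_1+\Delta_3)=0$ is literally the same cancellation the paper uses: since $\Delta_1+\Delta_3=(v_1+v_2)\sin h$ and $v_1+v_2=(I+R(x_1))v_1=2P(x_1)v_1$ is tangential to $\xi(x_1)$, your key identity coincides with the paper's $\sum_{i=\pm1}\xi(x_0)^{\top}v_i^{*}=0$. Where you genuinely depart is in the error control. The paper writes $U(x_{\pm1})-U(x_0)$ as integrals of $\xi$ along the circular arcs parametrised by $\theta\in[0,h]$, which produces two separate error terms: a centripetal term involving $x_0\sin\theta$, bounded by $r\sup_{B_r}|\xi|\,h^2$, and a term involving increments of $\xi$ along the arc, bounded by $r^2\sup_{B_r}|\partial\xi|\,h^2$; this is exactly why $C(r)$ has two pieces. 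You instead Taylor-expand along the straight chords emanating from the bounce point, so the full first-order terms cancel exactly and only second-order remainders survive, yielding the sharper bound $h^2 r^2\sup_{B_r}|\partial\xi|\le h^2C(r)$, with the $r\sup_{B_r}|\xi|$ term never needed. Your norm bookkeeping (circle preserves $|(x,v)|$, bounce preserves $|v|$, hence all waypoints and chords stay in the convex set $B_r$) is what licenses taking the suprema over $B_r$, and your treatment of the degenerate case $x_1\in\mathcal{N}$ (the identity holds trivially because $\xi(x_1)=0$, while the Taylor bounds are unaffected) is sound; the paper instead notes that in this case $x_{-1}=x_1$ exactly. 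One thing the paper's arc-wise computation buys is reuse: the same parametrisation and intermediate quantities reappear in the proof of Proposition \ref{prop:expansion_of_h}, whereas your chord expansion is cleaner but local to this lemma.
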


\begin{proof}
Let us introduce the temporary notation $x_0, x_{i}, v_{i}, v_{i}^*\ (i=\pm 1)$ defined by the following chain relation: 
\begin{align*}
z=(x_{-1},v_{-1})\xrightarrow{\Phi_{\mathrm{circle}}} 
(x_0, v_{-1}^*)\xrightarrow{\Phi_{\mathrm{bounce}}}(x_0, v_{1}^*)\xrightarrow{\Phi_{\mathrm{circle}}}
(x_{1},v_{1})=\phi_h(z). 
\end{align*}
We can rewrite the both ends of the chain $(x_{i},v_{i})\ (i=\pm 1)$ with the intermediate state $x_0$ and the velocities $v_{i}^*\ (i=\pm 1)$:  
\begin{equation}
    \begin{split}
    \begin{pmatrix}
      x_{-1}\\
      v_{-1}
    \end{pmatrix}
    &=\Phi_{\mathrm{circle}}^{-1}(x_0,v_{-1}^*)=
    \begin{pmatrix}x_0\cos h-v_{-1}^*\sin h\\ x_0\sin h+v_{-1}^*\cos h
    \end{pmatrix},\\ 
    \begin{pmatrix}
      x_{1}\\
      v_{1}
    \end{pmatrix}&=
    \Phi_{\mathrm{circle}}(x_0, v_1^*)=
    \begin{pmatrix}x_0\cos h+v_1^*\sin h\\ -x_0\sin h+v_1^*\cos h
    \end{pmatrix}. 
    \label{eq:circle_expression}
    \end{split}
\end{equation}
Since $v_1^*=R(x_0)v_{-1}^*$, from (\ref{eq:p_q_r}), 
the following simple relations are obtained for the two intermediate velocities:
\begin{equation}
    \begin{split}
    v_1^*-v_{-1}^*&=(R(x_0)-I)v_{-1}^*=-2Q(x_0)v_{-1}^*\\
    v_1^*+v_{-1}^*&=(R(x_0)+I)v_{-1}^*=2P(x_0)v_{-1}^*.
    \label{eq:identity_v}
    \end{split}
\end{equation}
We need to estimate the difference
\begin{equation}
U(x_1)-U(x_{-1})=\sum_{i=\pm 1}i(U(x_i)-U(x_0))
=\sum_{i=\pm 1}i(U(x_0\cos h+iv_{i}^*\sin h)-U(x_0)). 
\label{eq:expansion_u}
\end{equation}
We assume that $x_0\notin\mathcal{N}$ since if $x_0\in\mathcal{N}$, then $v_{-1}^*=-v_1^*$ and hence $x_{-1}=x_1$, i.e. $U(x_{-1})=U(x_1)$. Since $\xi=\nabla U$, the right-hand side of the equation (\ref{eq:expansion_u}) is 
\begin{align*}
    \sum_{i=\pm 1}i(U(x_i)-U(x_0))
    &=\left\{-\sum_{i=\pm 1}i\int_0^h\xi(x_0\cos \theta+iv_i^*\sin\theta)^{\top}x_0~\sin\theta~\dif\theta\right\}\\
    &\quad~+\left\{\sum_{i=\pm 1}\int_0^h\xi(x_0\cos\theta+iv_i^*\sin\theta)^{\top}v_i^*\cos\theta~\dif\theta\right\}. 
\end{align*}
The absolute value of the first term in the right-hand side is dominated above by $ \sup_{x\in B_r}|\xi(x)|~rh^2$ since $x_0\cos\theta+iv_i^*\sin\theta\in B_r$ and $|\sin\theta|\le |\theta|$.   By (\ref{eq:p_and_q}) and (\ref{eq:identity_v}), we have $\sum_{i=\pm 1}\xi(x)v_i^*=0$. Thus the second term is 
\begin{align*}
    \sum_{i=\pm 1}\int_0^h\left\{\xi(x_0\cos\theta+iv_i^*\sin\theta)-\xi(x_0)\right\}^{\top}v_i^*\cos\theta\dif\theta
\end{align*}
whose absolute value is dominated above by $\sup_{x\in B_r}|\partial\xi(x)| r^2h^2$. Thus the claim follows. 
\end{proof}

We focus on a projection 
$$
\mathbf{p}(z)=\begin{pmatrix}
      x\\
      P(x)v
    \end{pmatrix}. 
$$
We have the following decomposition of $z$ using $\mathbf{p}(z)$:  
\begin{align*}
    z=\begin{pmatrix}
      0\\
      Q(x)v
    \end{pmatrix}
    +
    \begin{pmatrix}
      x\\
      P(x)v
    \end{pmatrix}=\begin{pmatrix}
      0\\
      Q(x)v
    \end{pmatrix}
    +
    \mathbf{p}(z). 
\end{align*}
The first part in the right-hand side corresponds the sign flip, and it is the fast move of the sequence $\{(\phi_h)^n(z):n=0,1,\ldots\}$. The second part is tangential to the sign flip, and it evolves slowly. We are interested into the second part since the first move will be canceled out in the long run. A short term expansion of the transform $\mathbf{p}(\phi_h(z))$  is composed by 
\begin{equation}
\tildeb{a}(x,v)=2\begin{pmatrix}P(x)v\\
-P(x)x
\end{pmatrix},
\label{eq:a}
\end{equation}
and 
\begin{equation}
    \tildeb{b}(x,v)=\partial P(x)[\cdot, v,v]+\partial P(x)[\cdot, R(x)v,R(x)v]\in\mathbb{R}^d
    \label{eq:b}
\end{equation}
where $\partial P$ is defined on $\mathcal{N}^c$ as 
\begin{equation}
\label{eq:derivative_p}
    \partial P(x)[u,v,w]=\sum_{i,j,k}u_i\partial_{x_k} P_{ij}(x)v_jw_k=-(\overline{\xi}(x)^{\top}v)\partial\overline{\xi}(x)[u,w]-(\overline{\xi}(x)^{\top}u)\partial\overline{\xi}(x)[v,w]. 
\end{equation}
Note here that $|\tildeb{a}(z)|\le 2|z|$ since $|P(x)v|\le |v|$ and $|P(x)x|\le |x|$. 

\begin{proposition}\label{prop:expansion_of_h}
Let $\alpha\in\mathbb{R}$, $r>0$ and $v\in\mathbb{R}^d$. 
Suppose that $\xi$ is twice continuously differentiable. 
Suppose that $\mathcal{M}_\alpha\cap\mathcal{N}^c\neq\emptyset$.  Choose $\epsilon>0$ such that $\mathcal{M}_\alpha^\epsilon\subset\mathcal{N}^c$. 
Then there is a constant $C(\epsilon, \alpha, r)$ such that 
for $z=(x,v)$ with $|z|\le r$ and $x\in \mathcal{M}_\alpha^{\epsilon/2}$, 
\begin{equation}\label{eq:z_expansion}
    \left|\mathbf{p}(\phi_h(z))-\mathbf{p}(z)-h~\left(\tildeb{a}+\begin{pmatrix}0\\
    \tildeb{b}\end{pmatrix}\right)(z)\right|
    \le h^2~C(\epsilon, \alpha, r)
\end{equation}
 for $0<h< \min\{r^{-1}\epsilon,(\epsilon/2C(r))^{1/2}\}$, where $C(r)$ is defined in Lemma \ref{lem:deviation_of_u}. 
\end{proposition}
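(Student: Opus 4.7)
The plan is to unfold $\phi_h=\Phi_{\mathrm{circle}}\circ\Phi_{\mathrm{bounce}}\circ\Phi_{\mathrm{circle}}$ into the chain $(x_{-1},v_{-1})\to(x_0,v_{-1}^*)\to(x_0,v_1^*)\to(x_1,v_1)$ introduced in the proof of Lemma~\ref{lem:deviation_of_u}, and Taylor-expand both coordinates of $\mathbf{p}(\phi_h(z))-\mathbf{p}(z)$ to first order in $h$. Before expanding, I would verify that $x_0$ and $x_1$ remain in $\mathcal{M}_\alpha^\epsilon\subset\mathcal{N}^c$: the bound $h<\epsilon/r$ controls $|x_0-x_{-1}|\le hr+O(h^2)$, and the bound $h^2<\epsilon/(2C(r))$ together with Lemma~\ref{lem:deviation_of_u} controls $|U(x_1)-U(x_{-1})|$. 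This step is essential because $\partial P$ is singular like $1/|\xi|$ via (\ref{eq:expression_of_dxi}) and (\ref{eq:derivative_p}); the Taylor remainders can only be bounded uniformly once the trajectory is trapped in $\mathcal{M}_\alpha^\epsilon$, and this is exactly how the constant $C(\epsilon,\alpha,r)$ arises.

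For the $x$-block, the exact identity $x_1-x_{-1}=2P(x_0)v_{-1}^*\sin h$ combined with $\sin h=h+O(h^3)$, $v_{-1}^*=v-hx+O(h^2)$, and $P(x_0)=P(x)+O(h)$ (the last bound relying on uniform control of $\partial P$ on the safe region) gives $x_1-x_{-1}=2h P(x)v+O(h^2)$, matching the first block of $h\tildeb{a}(x,v)$. For the velocity block I would split
\[
P(x_1)v_1-P(x_{-1})v_{-1}=(P(x_1)-P(x_{-1}))v_{-1}+P(x_1)(v_1-v_{-1}),
\]
insert the exact identity $v_1-v_{-1}=-2x_0\sin h-2Q(x_0)v_{-1}^*\cos h$, and expand. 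Taylor expansion of $P$ yields $(P(x_1)-P(x_{-1}))v_{-1}=2h\,\partial P(x)[\cdot,v,P(x)v]+O(h^2)$; the term $-2x_0\sin h$ contributes $-2hP(x)x+O(h^2)$, supplying the second block of $h\tildeb{a}$. The delicate term is $-2\cos h\cdot P(x_1)Q(x_0)v_{-1}^*$, which looks $O(1)$ at first glance but is actually $O(h)$: the identity $P(x_0)Q(x_0)\equiv 0$ forces $P(x_1)Q(x_0)=(P(x_1)-P(x_0))Q(x_0)$, and Taylor expansion reduces it to $-2h\,\partial P(x)[\cdot,Q(x)v,R(x)v]+O(h^2)$.

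The main obstacle is then a purely algebraic identity: one must show that the accumulated $O(h)$ velocity contribution $2\partial P(x)[\cdot,v,P(x)v]-2\partial P(x)[\cdot,Q(x)v,R(x)v]$ equals $\tildeb{b}(x,v)=\partial P(x)[\cdot,v,v]+\partial P(x)[\cdot,R(x)v,R(x)v]$. This is proved by substituting $v=P(x)v+Q(x)v$ and $R(x)v=P(x)v-Q(x)v$ and invoking the symmetry of $\partial P(x)[u,v,w]$ in $(u,v)$ that is visible from (\ref{eq:derivative_p}); both sides then collapse to $2\partial P(x)[\cdot,P(x)v,P(x)v]+2\partial P(x)[\cdot,Q(x)v,Q(x)v]$. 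The residual $O(h^2)$ terms assemble cleanly because $\overline{\xi}$, $\partial\overline{\xi}$, and $\partial^2\overline{\xi}$ are bounded on $\mathcal{M}_\alpha^\epsilon$, and this delivers the advertised constant $C(\epsilon,\alpha,r)$.
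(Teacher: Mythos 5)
Your proposal is correct, and it reaches (\ref{eq:z_expansion}) by a genuinely different split of the velocity block than the paper's. The paper centers everything at the bounce point $x_0$: it writes $P(x_1)v_1-P(x_{-1})v_{-1}=P(x_0)(v_1-v_{-1})+\bigl\{(P(x_1)-P(x_0))v_1-(P(x_{-1})-P(x_0))v_{-1}\bigr\}$, so the normal component is annihilated exactly (since $P(x_0)(v_1^*-v_{-1}^*)=-2P(x_0)Q(x_0)v_{-1}^*=0$), the $\tildeb{a}$-part becomes the exact identity (\ref{eq:identity_a}), and the remainder $\widehat{b}$ has leading term $\sum_{i=\pm 1}\partial P(x_0)[\cdot,v_i^*,v_i^*]$, which is literally the definition (\ref{eq:b}) of $\tildeb{b}$; all that remains is a Lipschitz comparison to move the base point from $(x_0,v_{-1}^*)$ to $z$. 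Your one-sided split $(P(x_1)-P(x_{-1}))v_{-1}+P(x_1)(v_1-v_{-1})$ instead leaves the apparently $O(1)$ term $P(x_1)Q(x_0)v_{-1}^*$, which you correctly tame via $P(x_0)Q(x_0)=0$, and it yields the asymmetric combination $2\partial P(x)[\cdot,v,P(x)v]-2\partial P(x)[\cdot,Q(x)v,R(x)v]$; your final polarization identity is indeed valid -- writing $p=P(x)v$, $q=Q(x)v$, both this expression and $\tildeb{b}(x,v)$ equal $2\partial P(x)[\cdot,p,p]+2\partial P(x)[\cdot,q,q]$, and in fact bilinearity alone suffices (the cross terms cancel in pairs, so the symmetry of $\partial P$ in its first two slots that you invoke is not needed). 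What the paper's centering buys is that $\tildeb{b}$ emerges in its defined symmetric form with no extra algebra and the $\tildeb{a}$-identity is exact rather than approximate; what your route buys is that every expansion is anchored at $z$ itself, so no separate estimate of the form $|\tildeb{a}(x_0,v_{-1}^*)\sin h-\tildeb{a}(z)h|\le h^2R(\epsilon,\alpha,r)$ as in (\ref{eq:difference}) is required. One point of care when writing this up: the trapping step should be stated, as in the paper's display (\ref{eq:intermediate}), for the whole circular arcs $x_0\cos\theta\pm v_{\pm 1}^*\sin\theta$, $0\le\theta\le h$, and your Taylor remainders for $P(x_1)-P(x_{-1})$ and $P(x_1)-P(x_0)$ should be obtained by integrating $\partial P$ and $\partial^2 P$ along these arcs (splitting through $x_0$), because the straight chord from $x_{-1}$ to $x_1$ is not guaranteed to stay inside $\mathcal{M}_\alpha^\epsilon$ over the full stated range $h<r^{-1}\epsilon$ without shrinking constants.
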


\begin{proof}
For a function $f(x)$, we use a generic notation $f(x)=R(\epsilon,\alpha, r)$ if $\sup_{x\in B_r\cap\mathcal{M}_\alpha^\epsilon}|f(x)|<\infty$. 
By Lemma \ref{lem:deviation_of_u}, with the same temporary notation, for $0\le\theta\le h$, we have
$$
|U(x_1)-U(x_{-1})|\le h^2C(r)\le \epsilon/2 \quad\leadsto\quad |U(x_{\pm 1})-\alpha|\le \epsilon
$$
by the triangle inequality, since  $x_{-1}\in\mathcal{M}_\alpha^{\epsilon/2}$. On the other hand, since $x_{\pm 1}=x_0\cos h\pm v_{\pm 1}^*\sin h$ by (\ref{eq:circle_expression}), we have
$$|x_0\cos\theta\pm v_{\pm 1}^*\sin\theta-x_{\pm 1}
|\le r\theta\le \epsilon. 
$$
By the two inequalities, we have
\begin{equation}
    x_0\cos \theta\pm v^*_{\pm 1}\sin \theta 
    \in\mathcal{M}_\alpha^\epsilon\ (0\le \theta\le h). 
    \label{eq:intermediate}
\end{equation} 
We now proceed to show (\ref{eq:z_expansion}). 
We have
\begin{align}
\begin{pmatrix}
    x_1-x_{-1}\\
    P(x_0)(v_1-v_{-1})
\end{pmatrix}
&\quad\stackrel{(\ref{eq:circle_expression})}{=}
\begin{pmatrix}
    (v_1^*+v_{-1}^*)\sin h\\
    -2P(x_0)x_0\sin h+P(x_0)(v_1^*-v_{-1}^*)\cos h
\end{pmatrix}
\nonumber
\\
&\stackrel{(\ref{eq:p_and_q}),\ (\ref{eq:identity_v})}{=}
\begin{pmatrix}
  ~~2P(x_0)~v_{-1}^*~\sin h\\
  -2P(x_0)~x_0~~~\sin h
\end{pmatrix}
\nonumber
\\
&\quad\stackrel{(\ref{eq:a})}{=}\tildeb{a}(x_0, v_{-1}^*)\sin h. 
    \label{eq:identity_a}
\end{align}
Let 
\begin{align*}
\widehat{b}(x_0,v_{-1}^*)&=h^{-1}\left\{(P(x_1)-P(x_0))v_1-
(P(x_{-1})-P(x_0))v_{-1}\right\}. 
\end{align*}
From (\ref{eq:identity_a}), we obtain 
\begin{align*}
    \mathbf{p}(\phi_h(z))-\mathbf{p}(z)&=
    \begin{pmatrix}
      x_1-x_{-1}\\
      P(x_1)v_1-P(x_{-1})v_{-1}
    \end{pmatrix}
    =
    \tildeb{a}(x_0, v_{-1}^*)
    \sin h
    +
    \begin{pmatrix}
      0\\
      \widehat{b}(x_0,v_{-1}^*)
    \end{pmatrix}~h. 
\end{align*}
Thus it is sufficient to show two inequalities  
\begin{equation}
\begin{split}
&|\tildeb{a}(x_0, v_{-1}^*)\sin h-\tildeb{a}(z) h|\le h^2 R(\epsilon, \alpha, r),\\ 
&|\widehat{b}(x_0,v_{-1}^*)-\tildeb{b}(z)|\le h R(\epsilon, \alpha, r). 
\end{split}
\label{eq:difference}
\end{equation}
\begin{enumerate}
    \item[(a)] 
Since $|\tildeb{a}(z)|\le 2|z|\le 2r$, we have
$$
|\tildeb{a}(x_0, v_{-1}^*)\sin h-\tildeb{a}(x_0, v_{-1}^*)h|=|\tildeb{a}(x_0, v_{-1}^*)|~|\sin h-h|\le 2r\frac{h^3}{3!}. 
$$
Also, by (\ref{eq:circle_expression}) and (\ref{eq:intermediate}), we have
\begin{align*}
\left|\tildeb{a}(x_0, v_{-1}^*)-
\tildeb{a}(x_{-1}, v_{-1})\right|\le rh~\sup_{x\in B_r\cap\mathcal{M}_\alpha^\epsilon, v\in B_r}|\tildeb{a}'(x,v)|=h R(\epsilon,\alpha, r). 
\end{align*}
Thus the first part of (\ref{eq:difference}) follows since $z=(x_{-1},v_{-1})$. 
\item[(b)] 
By (\ref{eq:circle_expression}), we have
\begin{align*}
\widehat{b}(x_0,v_{-1}^*)&=h^{-1}\sum_{i=\pm 1}i(P(x_i)-P(x_0))v_i\\
&=h^{-1}\sum_{i=\pm 1}(P(x_0\cos h+iv_{i}^*\sin h)-P(x_0))(-x_0\sin h+iv_{i}^*\cos h)\\
&=
h^{-1}\sum_{i=\pm 1}\int_0^h\partial P(x_0\cos \theta+iv_{i}^*\sin \theta)[\cdot,(-x_0\sin \theta+iv_{i}^*\cos \theta),(-x_0\sin \theta+iv_{i}^*\cos \theta)]\dif\theta\\
&\quad -h^{-1}\sum_{i=\pm 1}\int_0^h(P(x_0\cos \theta+iv_{i}^*\sin \theta)-P(x_0))(x_0\cos \theta+iv_{i}^*\sin \theta)\dif\theta. 
\end{align*}
The second term in the right-hand side is on the order of $h$ since $P$ is differentiable on $B_r\cap\mathcal{M}_\alpha^\epsilon$. 
Similarly, since $P$ is twice differentiable on $B_r\cap\mathcal{M}_\alpha^\epsilon$, the first term is 
\begin{align*}
    \sum_{i=\pm 1}~\partial P(x_0)[\cdot, v_{i}^*,v_{i}^*]
    +h~R(\epsilon,\alpha, r)
    &=\tildeb{b}(x_0,v_{-1}^*)
    +h~R(\epsilon,\alpha, r). 
\end{align*}
On the other hand, by Taylor's expansion, we have (recall that $z=(x_{-1},v_{-1})$)
\begin{align*}
  \left|\tildeb{b}(x_{-1},v_{-1})-\tildeb{b}(x_0,v_{-1}^*)\right|\le h\sup_{x\in B_r\cap \mathcal{M}_\alpha^\epsilon, v\in B_r}|\tildeb{b}'(x,v)|=h~R(\epsilon,\alpha,r). 
\end{align*}
Thus the second part of (\ref{eq:difference}) follows. 
\end{enumerate}
\end{proof}

We study limit behaviour of the sequence $\{(\phi_h)^n(z):n=0,1,2,\ldots\}$ as $h\rightarrow 0$. 
Let
$z^h(t)=(x^h(t),v^h(t))=(\phi_h)^n(z)$ if $nh\le t<(n+1)h$ for $n=0,1,\ldots$ be the continuous process version of the sequence $\{(\phi_h)^n(z):n=0,1,\ldots\}$. We also define $w^h(t)=P(x^h(t))v^h(t)$ that is the tangential component of $v^h(t)$ to $\overline{\xi}(x^h(t))$. 

\begin{theorem}\label{thm:limit_process}
Suppose that $\xi$ is thrice differentiable and $\inf_{|x|\rightarrow\infty}U(x)=+\infty$. 
Let $\alpha\in\mathbb{R}$ such that $\mathcal{M}_\alpha\subset\mathcal{N}^c$ and $\mathcal{M}_\alpha\neq \emptyset$.
Choose $x\in \mathcal{M}_\alpha$ $v\in\mathbb{R}^d$. 
For $T>0$, consider the process $(x^h(t),w^h(t))$ such that  $z^h(0)=(x,v)$. 
Let $r=|z^h(0)|$. Then there exists $H>0$ such that for $0<h<H$, we have
$$
e_T:=\sup_{0\le t\le T}|\mathbf{p}(z^h(t))-\zeta(t)|\le h~C(\epsilon,\alpha,r, T)
$$
for some $C(\epsilon,\alpha,r, T)>0$
where $\zeta(t)=(x(t),w(t))$ is the solution of 
\begin{equation}
\zeta'(t)=\mathbf{a}(\zeta(t))+\begin{pmatrix}
  0\\ \mathbf{b}(\zeta(t))
\end{pmatrix},\  \zeta(0)=(x(0),w(0))=(x,P(x)v)
\label{eq:limit_equation}
\end{equation}
where 
\begin{equation}
\begin{split}
\mathbf{a}(x,w)&=2(w, -P(x)x)^{\top}, \\
\mathbf{b}(x,w)&=-2(r^2-|w|^2-|x|^2)\partial\overline{\xi}(x)[\cdot,\overline{\xi}(x)]-2
\partial\overline{\xi}(x)[w,w]\overline{\xi}(x).
\end{split}
\label{eq:expression_b}
\end{equation}
\label{pro::3.1}
\end{theorem}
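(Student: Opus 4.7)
The plan is the classical discrete Euler-convergence argument: Proposition \ref{prop:expansion_of_h} gives a one-step increment of order $h^2$, and a discrete Gronwall step converts this into a $O(h)$ global error on $[0,T]$. To run it I need three ingredients: (i) to re-express $(\tildeb{a}, \tildeb{b})$ in terms of $\zeta = (x, w)$ so that the increment matches the RHS of the limit ODE (\ref{eq:limit_equation}); (ii) to confine both orbits $z^h(t)$ and $\zeta(t)$ to a common compact subset of $\mathcal{N}^c$ uniformly in small $h$; and (iii) a Lipschitz bound for $(\mathbf{a}, \mathbf{b})$ on that set. The algebraic identity in (i) is what distinguishes this proof from a textbook Euler convergence, and I expect it to be the main technical obstacle.

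For the discrete confinement, I would first note that the circle transform is a rotation of $(x, v)$ and the bounce transform reflects $v$ while fixing $x$, so $\phi_h$ preserves $|z|^2$, whence $|z^h(t)|^2 \equiv r^2$ for all $t$. Iterating Lemma \ref{lem:deviation_of_u} across the at most $\lceil T/h\rceil$ steps yields $|U(x^h(t)) - \alpha| \le T h\, C(r)$, so for $h$ small enough the whole discrete orbit lies in $B_r \cap \mathcal{M}_\alpha^{\epsilon/2} \subset \mathcal{N}^c$. For (i), the identity $\tildeb{a}(z) = \mathbf{a}(x, P(x)v)$ is immediate from (\ref{eq:a}). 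For $\tildeb{b}$, I decompose $v = w + c\,\overline{\xi}(x)$ and $R(x)v = w - c\,\overline{\xi}(x)$ with $c = \overline{\xi}(x)^\top v$; the $c$-linear cross terms in (\ref{eq:b}) cancel, leaving
\[
\tildeb{b}(x, v) = 2\,\partial P(x)[\cdot, w, w] + 2 c^2\, \partial P(x)[\cdot, \overline{\xi}, \overline{\xi}].
\]
Combining (\ref{eq:derivative_p}) with $\overline{\xi}^\top w = 0$, $\overline{\xi}^\top \overline{\xi} = 1$, and $\partial\overline{\xi}[\overline{\xi}, \overline{\xi}] = 0$ (which follows from (\ref{eq:expression_of_dxi}) and $P(x)\overline{\xi}(x) = 0$), the two summands reduce to $-2\,\partial\overline{\xi}[w, w]\,\overline{\xi}$ and $-2c^2\,\partial\overline{\xi}[\cdot, \overline{\xi}]$. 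Substituting $c^2 = |v|^2 - |w|^2 = r^2 - |x|^2 - |w|^2$ from energy conservation then recovers $\mathbf{b}(x, w)$ of (\ref{eq:expression_b}) exactly.

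For the continuous trajectory, I would check two hidden conservation laws that keep $\zeta(t)$ in a common compact set. First, a direct computation gives $\tfrac{d}{dt}(\overline{\xi}(x)^\top w) = 2\,\partial\overline{\xi}[w,w] + \overline{\xi}^\top \mathbf{b}(x,w)$; using $\overline{\xi}^\top \partial\overline{\xi}[\cdot, \overline{\xi}] = \tfrac12 \nabla|\overline{\xi}|^2 \cdot \overline{\xi} = 0$, the second term equals $-2\,\partial\overline{\xi}[w,w]$, so $\overline{\xi}(x(t))^\top w(t) \equiv 0$ and therefore $\tfrac{d}{dt} U(x(t)) = 2\,\xi(x)^\top w = 2|\xi|\,(\overline{\xi}^\top w) = 0$, i.e.\ $x(t) \in \mathcal{M}_\alpha$. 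Second, writing $c^2(t) = r^2 - |x(t)|^2 - |w(t)|^2$, a similar computation gives the linear ODE $\tfrac{d}{dt} c^2 = 4 c^2\,\partial\overline{\xi}[w, \overline{\xi}]$, so $c^2(t) \ge 0$ for all $t$ starting from $c^2(0) = |v|^2 - |P(x)v|^2 \ge 0$. Hence $\zeta(t) \in K := (\mathcal{M}_\alpha \cap B_r) \times B_r$, which is compact in $\mathcal{N}^c$, so $\mathbf{a}, \mathbf{b}$ are $L$-Lipschitz there for some $L < \infty$, and the solution exists globally on $[0, T]$.

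With these pieces in hand the finish is routine. Setting $e_n = |\mathbf{p}(z^h(nh)) - \zeta(nh)|$, Proposition \ref{prop:expansion_of_h} together with the identification of paragraph two gives $\mathbf{p}(z^h((n{+}1)h)) = \mathbf{p}(z^h(nh)) + h\bigl(\mathbf{a} + (0, \mathbf{b})^\top\bigr)\bigl(\mathbf{p}(z^h(nh))\bigr) + O(h^2)$, while Taylor's theorem provides the analogous $O(h^2)$ expansion for $\zeta((n{+}1)h) - \zeta(nh)$. Subtracting and applying the Lipschitz bound yields $e_{n+1} \le (1 + L h) e_n + C h^2$, so discrete Gronwall gives $e_n \le C h (e^{LT} - 1)/L$ uniformly in $n \le \lceil T/h\rceil$. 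Extending from grid times to arbitrary $t \in [0, T]$ costs only $O(h)$ since $z^h$ is piecewise constant and $\zeta$ is Lipschitz, giving the stated bound on $e_T$. The Euler/Gronwall bookkeeping is standard; the algebraic identification of $\tildeb{b}$ with $\mathbf{b}$ on the energy surface, together with verifying the conservation laws of the limit flow, is the work.
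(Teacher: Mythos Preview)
Your proposal is correct and follows the same overall architecture as the paper's proof: identify $\tildeb{a}(z)=\mathbf{a}(\mathbf{p}(z))$ and $\tildeb{b}(z)=\mathbf{b}(\mathbf{p}(z))$ on the energy sphere, confine both discrete and continuous orbits to a compact subset of $\mathcal{N}^c$, and close with a Gr\"onwall argument.

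Two execution differences are worth noting. First, your identification of $\tildeb{b}$ with $\mathbf{b}$ via the orthogonal splitting $v=w+c\,\overline{\xi}$ and cancellation of the $c$-linear terms is cleaner than the paper's route, which expands $w^\top\tildeb{b}(x,v)$ directly through (\ref{eq:derivative_p}) to obtain (\ref{eq:expression_B}) and then matches it to $\mathbf{b}(\mathbf{p}(z))$. Second, for global existence of $\zeta(t)$ on $[0,T]$ the paper iterates Picard--Lindel\"of using only $x(t)\in\mathcal{M}_\alpha$ from Remark~\ref{rem:property_of_the_limit}, without an explicit a~priori bound on $|w(t)|$; your observation that $c^2(t):=r^2-|x(t)|^2-|w(t)|^2$ satisfies the linear ODE $\tfrac{d}{dt}c^2=4c^2\,\partial\overline{\xi}[w,\overline{\xi}]$ and hence stays nonnegative gives $|\zeta(t)|\le r$ directly, which both closes the existence argument more tightly and pins down the compact set on which the Lipschitz constant is taken. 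Finally, the paper packages the error as $\delta^h(t)=\mathbf{p}(z^h(t))-\mathbf{p}(z^h(0))-\int_0^t\mathbf{c}(\mathbf{p}(z^h(s)))\,ds$ and applies the integral Gr\"onwall inequality, whereas you run the discrete recursion $e_{n+1}\le(1+Lh)e_n+Ch^2$; the two are equivalent here.
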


\begin{proof}
First, we consider the existence and uniqueness of the limit process $\zeta(t)$.  Since $\inf_{|x|\rightarrow\infty}U(x)=+\infty$, the level set $\mathcal{M}_\alpha$ is a compact set. Since $\xi$ is thrice differentiable, the functions $a$ and $b$ are differentiable and the derivatives are bounded on $\mathcal{M}_\alpha^\epsilon\times K$ where $K$ is any compact set of $\mathbb{R}^d$. Therefore, by the Picard--Lindel\"of theorem, there exists a unique solution $\zeta(t)=(x(t),w(t))\ (0\le t\le h_1)$ for some $h_1>0$.  By Remark \ref{rem:property_of_the_limit}, $x(h)\in\mathcal{M}_\alpha$. Therefore, by the same argument replacing $t=0$ by $t=h_1$, there exists a unique solution $\zeta(t)=(x(t),w(t))\ (0\le t\le h_2)$ for some $h_2>h_1$. By iterating this argument, we obtain the unique solution $\zeta(t)\ (0\le t\le T)$.

By (\ref{eq:derivative_p}) and with $R(x)\overline{\xi}(x)=-\overline{\xi}(x)$, the vector-valued function $\tildeb{b}$ can be written as
\begin{align}
w^{\top}\tildeb{b}(x,v)
&\stackrel{(\ref{eq:b})}{=}\partial P(x)[w,v,v]+\partial P[w,R(x)v,R(x)v]
\nonumber
\\
&\stackrel{(\ref{eq:derivative_p})}{=}
-(\overline{\xi}(x)^{\top}v)\partial\overline{\xi}(x)[w,v]-(\overline{\xi}(x)^{\top}w)\partial\overline{\xi}(x)[v,v]
\nonumber
\\
&\quad
+(\overline{\xi}(x)^{\top}v)\partial\overline{\xi}(x)[w,R(x)v]-(\overline{\xi}(x)^{\top}w)\partial\overline{\xi}(x)[v,R(x)v]
\nonumber
\\
&\stackrel{(\ref{eq:p_q_r})}{=}-2(\overline{\xi}(x)^{\top}v)^2\partial\overline{\xi}(x)[w,\overline{\xi}(x)]-2
(\overline{\xi}(x)^{\top}w)\partial\overline{\xi}(x)[v,P(x)v],
\label{eq:expression_B}
\end{align}
where we used the fact that 
$\partial\overline{\xi}[v,\cdot]=\partial\overline{\xi}[R(x)v,\cdot]$ by (\ref{eq:p_and_q}) and (\ref{eq:expression_of_dxi}). 
Let 
\begin{align*}
\tildeb{c}(x,v)&=\tildeb{a}(x,v)+(0,\tildeb{b}(x,v))^{\top},\\
\mathbf{c}(x,w)&=\mathbf{a}(x,w)+(0,\mathbf{b}(x,w))^{\top}. 
\end{align*}  
Observe that $a(z)=\mathbf{a}(\mathbf{p}(z))$. Also, 
 if $x^2+v^2=r^2$, then 
$r^2-|P(x)v|^2-|x|^2=|v|^2-|P(x)v|^2=(\overline{\xi}(x)^{\top}v)^2$ and hence 
\begin{align*}
    \mathbf{b}(\mathbf{p}(z))
    &\stackrel{(\ref{eq:expression_b})}{=}
    -2(r^2-|P(x)v|^2-|x|^2)~\partial\overline{\xi}(x)[\cdot,\overline{\xi}(x)]-2\partial\overline{\xi}(x)[P(x)v,P(x)v]\overline{\xi}(x)\\
    &\stackrel{(\ref{eq:p_and_q})}{=}
    -2(\overline{\xi}(x)^{\top}v)^2~\partial\overline{\xi}(x)[\cdot,\overline{\xi}(x)]-2\partial\overline{\xi}(x)[v,P(x)v]\overline{\xi}(x)\\
    &\stackrel{(\ref{eq:expression_B})}{=}b(z). 
\end{align*}
Thus
\begin{equation}
\label{eq:c_identity}
    \tildeb{c}(z)=\mathbf{c}(\mathbf{p}(z)). 
\end{equation}
Let $N=[T/h]$. 
The process $z^h$ always remains on a sphere, i.e. there exists $r>0$ such that 
$$
|z^h(t)|^2=r^2\ (0\le t\le T). 
$$
We choose $\epsilon>0$ so that $\mathcal{M}_\alpha^\epsilon\subset\mathcal{N}^c$ and set 
$$
H =  C(r)^{-1}T^{-1}(\epsilon/2). 
$$
From Lemma \ref{lem:deviation_of_u}, for $n=1,\ldots, N$ and for $0<h<H$, 
$$
|U(x^h(nh)-U(x^h((n-1)h))|\le h^2 C(r)\quad\leadsto\quad |U(x^h(nh))-U(x)|\le \frac{\epsilon}{2}\quad\leadsto\quad x^h(nh)\in\mathcal{M}_\alpha^{\epsilon/2}. 
$$
The inequality (\ref{eq:z_expansion}) holds for every $z^h(nh)$ by Proposition \ref{prop:expansion_of_h}. Thus, for 
$$
\delta^h(t):=\mathbf{p}(z^h(t))-\mathbf{p}(z^h(0))-\int_0^t \tildeb{c}(z^h(s))\mathrm{d}s, 
$$
we have
\begin{equation}\label{eq:b_averaging}
\sup_{0\le t\le T}\left|\delta^h(t)\right|\le h~T~C(\epsilon,\alpha,r)=:hTC
\end{equation}
for $0<h<\min\{H, r^{-1}\epsilon, (\epsilon/2C(r))^{1/2}\}$. 
At the same time, we have
\begin{align*}
   \delta(t):= \zeta(t)-\zeta(0)-\int_0^t \mathbf{c}(\zeta(s))\dif s=0\ (0\le t\le T). 
\end{align*}
Therefore, according to (\ref{eq:c_identity}), 
for $0<t\le T$, we have
\begin{align*}
e_t&=\sup_{0\le s\le t}|\mathbf{p}(z^h(s))-\zeta(s)|\\
&=\sup_{0\le s\le t}\left|\delta^h(t)-\delta(t)+\int_0^s\mathbf{c}(\mathbf{p}(z^h(u)))-\mathbf{c}(\zeta(u))\dif u\right|\\
   &\le htC+\sup_{0\le s\le t}\left|\int_0^s\mathbf{c}(\mathbf{p}(z^h(u)))-\mathbf{c}(\zeta(u))\dif u\right|\\
    &\le 
    htC+\left(\sup_{x\in B_r\cap \mathcal{M}_\alpha^\epsilon, v\in B_r} |\mathbf{c}'(x,v)|\right)~\int_0^te_u\dif u. 
\end{align*}
The claim follows from Gr\"onwall's lemma. 
\end{proof}

\begin{remark}\label{rem:property_of_the_limit}
The process $w(t)$ is always tangential to $\overline{\xi}(x(t))$, that is, 
\begin{equation}
\label{eq:xw}
\overline{\xi}(x(t))^{\top}w(t)=0. 
\end{equation}
This equation is true when $t=0$, and if $t>0$, we have
\begin{align*}
    \frac{\dif}{\dif t}(\overline{\xi}(x(t))^{\top}w(t))&\stackrel{~\quad~}{=}\left\{ \frac{\dif}{\dif t}\overline{\xi}(x(t))\right\}^{\top}w(t)
    +
     \overline{\xi}(x(t))^{\top}\left\{\frac{\dif}{\dif t}w(t)\right\}\\
     &\stackrel{(\ref{eq:limit_equation})}{=}
     2\partial\overline{\xi}(x(t))[w(t),w(t)]
    -2
     \overline{\xi}(x(t))^{\top}P(x(t))x(t)
     +\overline{\xi}(x(t))^{\top}b(x(t),w(t))\\
     &\stackrel{~\quad~}{=}0
\end{align*}
since $P(x(t))\overline{\xi}(x)=0$ and $\partial\overline{\xi}(x)[\overline{\xi}(x),\cdot]=0$ by (\ref{eq:p_and_q}) and (\ref{eq:limit_equation}). 
Thus (\ref{eq:xw}) holds for $t\ge 0$. 
Due to the property, 
the process $x(t)$ does not change the value of $U$ since
$$
\frac{\dif U(x(t))}{\dif t}=\xi(x(t))^{\top}\frac{\dif x(t)}{\dif t}\stackrel{(\ref{eq:limit_equation})}{=}2\xi(x(t))^{\top}w(t)\stackrel{(\ref{eq:xw})}{=}0. 
$$
Note that the process $\zeta(t)=(x(t),w(t))$ does not stay on a sphere, since it is the limit of $(x^h(t),P(x^h(t))v^h(t))$, not of $(x^h(t), v^h(t))$, where the latter remains on the sphere of radius $r=(|x^h(0)|^2+|v^h(0)|^2)^{1/2}$, but the former is inside the sphere of radius $r$ (see Figure \ref{fig::manifold}). The process $x(t)$ always stay on the manifold $\mathcal{M}_\alpha$, while $x^h(t)$ will move along some ellipses which are bounded and around the trace of $x(t)$ according to Theorem \ref{pro::3.1}.
\end{remark}


\begin{figure}[h]
\centering
\includegraphics[width=0.7\textwidth]{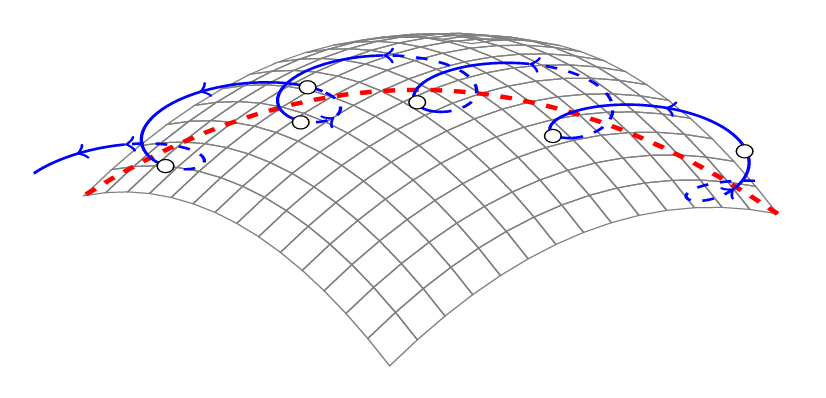}
\caption{
Illustration of the path of the process $x^h(t)$. The surface represents  the manifold $\mathcal{M}_\alpha$. The red dashed line denotes $x(t)=p_X(\zeta(t))$ and the blue line is the trace of $x^h(t)$, the  circles are the points where bounce transform occur.
}
\label{fig::manifold}
\end{figure}

\section{Simulation}\label{sec:simulation}

\subsection{Simulation settings}

In this section, we compare the performance of eight proposed kernels as described in Table \ref{table:algorithms}. The kernels are divided into two classes. The first class consists of four gradient-free, information-blind kernels. This class includes the random-walk Metropolis kernel along with the preconditioned Crank--Nicolson kernel (\textsc{pcn}), the mixed preconditioned Crank--Nicolson kernel \citep{MR3668488}  (\textsc{mpcn})
and the guided mixed preconditioned Crank--Nicolson kernel \citep{kamatani2020}  (\textsc{gmpcn}). The \textsc{pcn} kernel is the Metropolis kernel with normal reference measure, and \textsc{mpcn} is its Haar mixture counterpart, and \textsc{gmpcn} is the non-reversible version of \textsc{mpcn}. The second class consists of four gradient-based kernels. This class includes the infinite dimensional Hamiltonian Monte Carlo kernel (Example \ref{ex:idhmc}) ($\infty$-\textsc{hmc}), the Hug-and-Hop kernel \citep{ludkin2019hug} (\textsc{hh}), and the two proposed algorithms, the Weave-Metropolis kernel and the Haar-Weave-Metropolis kernel. We do not show the results of the Hamiltonian Monte Carlo, since the infinite dimensional Hamiltonian Monte Carlo always performs better in our experiments. 
All kernels except \textsc{rwm} and \textsc{hh} are Gaussian-based in the sense that the proposal distributions are invariant to the normal distribution or its mixture. 

\begin{table*}[ht!]
\centering
	\caption{Markov kernels in Section \ref{sec:simulation}. The first four algorithms are gradient-free, information blind kernels. The last four algorithms are gradient based, informed kernels. }
	\label{table:algorithms}
	\begin{tabular}{ll}
		\textsc{rwm}& Random-walk Metropolis\\
		\textsc{pcn}& Preconditioned Crank--Nicolson\\
		\textsc{mpcn}& Mixed preconditioned Crank--Nicolson\\
		\textsc{gmpcn}& $\Delta$-guided mixed preconditioned Crank--Nicolson\\
		\hline
		$\infty$-\textsc{hmc}& Infinite dimensional Hamiltonian Monte Carlo\\
		\textsc{hh}& Hug-and-Hop\\
		\textsc{wm}& Weave-Metropolis\\
	    \textsc{hwm}& Haar-Weave-Metropolis
	\end{tabular}
\end{table*}

\begin{table*}[ht!]
\centering
	\caption{Abbreviations in tables in Section \ref{sec:simulation}. }
	\label{table:abbreviation}
	\begin{tabular}{ll}
		\textsc{ar}& The acceptance ratio\\
		\textsc{ess}-min& The minimum of the effective sample sizes for the coordinates\\
		\textsc{essl}& The effective sample size for the likelihood\\
		\textsc{msjd}& The mean square jump distance\\
	\end{tabular}
\end{table*}

The efficiency of the Markov kernels is compared using the effective sample size \cite[see][p 126]{liu2001monte} and the mean square jump distance \citep{MR1428751}. 
Let $x_1,x_2,\ldots$ be a sequence of one-dimensional stationary Markov chain with exponentially ergodic Markov kernel. The 
 effective sample size is defined by 
 $$
 N\frac{\lambda^2}{\sigma^2}
 $$
 where $\lambda^2$ is the variance of the invariant measure and
 $$
 \sigma^2=\lambda^2+2\sum_{n=1}^{\infty}\operatorname{cov}(x_1,x_{n+1})
 $$
 is the asymptotic variance of the Markov kernel corresponding to the function of interest $f(x)=x$. In our case, the Markov chain is multidimensional. Therefore, we evaluate the (approximation of the) effective sample size for each coordinate of the Markov chain and take its minimum. We also consider the effective sample size of the log likelihood with respect to the Lebesgue measure. The coordinate processes and the log likelihood processes are not Markov chains, but we expect these statistics to illustrate the efficiency of the Markov kernels. In this work, all effective sample sizes are computed using the \textit{R} package \textit{mcmcse} developed by \cite{mcmcse}. On the other hand, 
 the mean square jump distance is defined as 
 $$
 \frac{1}{N}\sum_{n=0}^{N-1}\left|x_{n+1}-x_n\right|^2.
 $$

All statistics are summarised in Table \ref{table:abbreviation}.
Note that two acceptance probabilities are given for the Hug-and-Hop kernel. This is because the Hug-and-Hop procedure consists of two types of Metropolis--Hastings procedures, resulting in two acceptance probabilities.

In the experiment, the tuning parameters of the Markov kernels are divided into three categories. 
The first category is the global parameters. More precisely 
the mean parameter $M$ and the variance parameter $\Sigma$ of the target distribution. These parameters are estimated using the adaptive Metropolis kernel, independently of the kernels in Table \ref{table:algorithms}, to allow a fair comparison. The number of iterations is fixed at $10^5$. 
These parameters are inserted into the normal distributions of the Gaussian-based kernels.  The variance parameter is also used in \textsc{rwm}, and both parameters are used in \textsc{hh}. For the Haar motion methods (\textsc{mpcn}, \textsc{gmpcn}, \textsc{hwm}), not all global parameters are fully determined at this stage because one global parameter $g$ is updated by the iteration.

The second category is the local parameters. This class includes 
the local scale parameter $s$ in $Q(x,\cdot)=\mathcal{N}_d(x, s\Sigma)$ of \textsc{rwm} and the parameter $h$ in the circle transform. These parameters are tuned using an acceptance rate criterion.   For \textsc{rwm}, the parameters are tuned so that the acceptance rate is approximately $25\%$ \citep{MR1428751}. For preconditioned Crank--Nicolson kernels, there is no known optimal criterion, but we choose acceptance rates of $30\%$ to $50\%$, which leads to better performance in our experiments. 
As suggested by \cite{Beskos_2013}, the acceptance probability of $\infty$-\textsc{hmc} is chosen around $65\%$.  For the Weave kernels, the acceptance rates are chosen around $60\%$.

For the remaining parameters, rather artificial values are set. 
Since we consider relatively high-dimensional problems, the gradient-based methods have high computational costs in our experiments. Therefore, we set the number of transforms per iteration to $L=1$ so that they are comparable to the well-tuned non-gradient-based methods. In \textsc{hh}, there are two types of transforms. Therefore, we take a single step $L=1$ for each of them. For other tuning parameters of \textsc{hh}, we follow the tuning recommendations described in \cite{ludkin2019hug}. The kernel had difficulty for our experiments since it is designed for light or super-light target distributions. 
See \cite{ludkin2019hug} for the detail. 

 We performed all experiments on a desktop computer with 6 cores Intel i7-5930K (3.50 GHz) CPU. All algorithms use the package \textit{RcppArmadillo} version 0.9.850.1.0 \cite{RcppArmadillo}. The code for all experiments is available in the online repository at the link https://github.com/Xiaolin-Song/WM.

\subsection{Logistic regression}

Posterior inference of Bayesian logistic regression with the $d$-dimensional Cauchy prior distribution $\propto (1+|x|^2)^{-(d+1)/2}\dif x$ is considered. The model is applied to two data sets, 
one is the cancer dataset and the other is the sonar dataset from  \cite{Dua:2019}. The cancer dataset contains 569 observations and 31 variables, and the sonar dataset contains 208 observations and 60 explanatory variables. As suggested in \cite{MR2655663}, all non-binary variables are scaled to have a mean of $0$ and a standard deviation of $0.5$. 
We run each algorithm for $1\times 10^6$ iterations and take the first $1\times 10^5$ iterations as burn-in. The results are shown in Tables \ref{tabl:1} and \ref{tabl:2}. 

For both tables, 
in terms of effective sample size per second, the Haar motion methods and $\infty$-\textsc{hmc} perform better than the others. This suggests that the Haar motion effectively explores this heavy-tailed distribution and $\infty$-\textsc{hmc} effectively explores the complicated likelihood surface. However, Table \ref{tabl:2} shows the advantage of the gradient information more clearly, which is related to the fact that the sonar dataset has a larger number of explanatory variables. 

The  important point to note here is that when comparing two normal reference type kernels (\textsc{pcn}, \textsc{wm}) and the three normal mixture reference type kernels (\textsc{mpcn, gmpcn, hwm}), the Weave methods perform better in each of the statistics. This indicates the advantage of the bounce transform. 

Another important observation is the comparison between $\infty$-\textsc{hmc} and \textsc{wm}. As we will see in other experiments, $\infty$-\textsc{hmc} always outperforms \textsc{wm} in terms of effective sample size. This is natural since \textsc{wm} does not change the likelihood as much as described in Remark \ref{rem:property_of_the_limit}. On the other hand, the mean square jump per second illustrates the limitation of $\infty$-\textsc{hmc}. This is probably due to the fact that for a large $|x|$, the kernel behaves like the random-walk Metropolis kernel, since the contribution of the gradient is relatively small when $x$ is far from the origin. On the other hand, the \textsc{wm} effectively uses the directional information even when $|x|$ is large. 

While \textsc{wm} has difficulty updating the likelihood, \textsc{hwm} shows the best performance for all statistics, even better than $\infty$-\textsc{hmc}, thanks to Haar motion.

\renewcommand{\tabcolsep}{3pt} 
\begin{table}
\caption{The performance for the logistic regression result with Cancer data}
\begin{tabularx}{\textwidth}{@{} C{0.2} C{0.2} C{0.2} C{0.2}C{0.2} C{0.2} C{0.2} C{0.2} C{0.2} @{}} 
Methods & \textsc{essl}  &\textsc{ess}-min & \textsc{msjd} & \textsc{essl}/s &  \textsc{ess}-min/s & \textsc{msjd}/s & Time & \textsc{ar} \\
\midrule
\textsc{rwm}& 1906.38 & 5001.71 & 108.56 & 33.06 & 86.74 & 1.88 & 57.66 & 0.20 \\ 
  \textsc{pcn}& 1970.02 & 3554.88 & 4542.60 & 32.85 & 59.28 & 75.75 & 59.97 & 0.31 \\ 
  \textsc{mpcn}& 18412.10 & 33675.57 & 5505.95 & 311.18 & 569.15 & 93.05 & 59.17 & 0.35 \\ 
  \textsc{gmpcn}& 26227.38 & 46693.52 & 5507.35 & 433.63 & 772.00 & 91.06 & 60.48 & 0.35 \\ 
  $\infty$-\textsc{hmc}& 46175.15 & 86752.44 & 5062.57 & 437.12 & 821.25 & 47.93 & 105.63 & 0.63 \\ 
  \textsc{hh}& 22263.42 & 49639.42 & 2115.66 & 79.62 & 177.53 & 7.57 & 279.62 & 0.83/0.25  \\ 
  \textsc{wm}& 8987.86 & 36601.77 & 14302.22 & 89.23 & 363.36 & 141.99 & 100.73 & 0.62 \\ 
  \textsc{hwm}& 48826.31 & 140611.82 & 20164.62 & 471.93 & 1359.08 & 194.90 & 103.46 & 0.64 \\ 
\end{tabularx} \label{tabl:1}
\end{table}

\renewcommand{\tabcolsep}{3pt} 
\begin{table}[H]
\caption{The performance for the logistic regression result with Sonar data}
\begin{tabularx}{\textwidth}{@{} C{0.2} C{0.2} C{0.2} C{0.2}C{0.2} C{0.2} C{0.2} C{0.2} C{0.2} @{}} 
Methods & \textsc{essl}  &\textsc{ess}-min & \textsc{msjd} & \textsc{essl}/s &  \textsc{ess}-min/s & \textsc{msjd}/s & Time & \textsc{ar} \\

\midrule
\textsc{rwm}& 790.29 & 1782.22 & 0.84 & 26.88 & 60.61 & 0.03 & 29.40 & 0.23 \\ 
  \textsc{pcn}& 1366.79 & 2552.64 & 23.79 & 42.85 & 80.03 & 0.75 & 31.90 & 0.28 \\ 
  \textsc{mpcn}& 6364.51 & 7294.80 & 49.71 & 209.53 & 240.16 & 1.64 & 30.38 & 0.26 \\ 
  \textsc{gmpcn}& 11819.67 & 18009.25 & 50.17 & 361.32 & 550.53 & 1.53 & 32.71 & 0.27 \\ 
  $\infty$-\textsc{hmc}& 38026.66 & 60822.82 & 160.52 & 623.98 & 998.05 & 2.63 & 60.94 & 0.66 \\ 
  \textsc{hh}& 17923.04 & 35698.99 & 52.98 & 64.98 & 129.43 & 0.19 & 275.81 & 0.83/0.34 \\ 
  \textsc{wm}& 6076.32 & 16777.28 & 318.16 & 102.96 & 284.29 & 5.39 & 59.01 & 0.68 \\ 
  \textsc{hwm}& 41157.92 & 89597.69 & 442.81 & 690.57 & 1503.32 & 7.43 & 59.60 & 0.65 \\ 

\end{tabularx}\label{tabl:2}
\end{table}

\subsection{The stochastic volatility model}
Next, we consider the stochastic volatility model. In this example, we consider the sampling of the latent variables in the following stochastic volatility model. For a positive integer $T$, for $t=1,\ldots, T$, let 
\begin{equation}
\nonumber
\begin{split}
x_t&=\phi~x_{t-1}+\epsilon_t, \quad\epsilon_t\sim\mathcal{N}(0,\sigma^2)\\
y_t&=\exp\left(\frac{x_t}{2}\right)~w_t,\quad w_t\sim\mathcal{N}(0,1)\\
x_0&=\frac{\phi}{1-\phi^2}~w_0,\quad w_0\sim\mathcal{N}(0,\sigma^2)
\end{split}
\end{equation}
where $\epsilon_1,\epsilon_2,\ldots$ and $w_0, w_1$ are independent. 
The variables $y_1, y_2,\ldots$ are observed, while $x_0, x_1,\ldots$ are not observed. The parameters of interest are the mean inversion parameter $\phi\sim \mathcal{B}e(2,5)$ and the standard deviation $\sigma\sim\mathcal{G}(5,0.2)$. In our simulation, the number of observations is $T=100$ with $\phi=0.5$ and $\sigma=10$.

 The result is shown in Table \ref{table:3}. It seems that gradient-based methods are better for this example, which is probably due to the fact that the likelihhod in this case is relatively cheap to evaluate.  We would like to note that when comparing  \textsc{mpcn} and  \textsc{hwm}, the Haar motion itself is not sufficient to explore the likelihood surface. On the other hand, the bounce transform alone is not sufficient when comparing $\infty$-\textsc{hmc} and some bounce kernels (\textsc{hh}, \textsc{wm}). The combination is very effective, as shown by the performance of the \textsc{hwm}.

\renewcommand{\tabcolsep}{2pt} 
\begin{table}
\caption{The performance for the stochastic volatility model}
\begin{tabularx}{\textwidth}{@{} C{0.2} C{0.2} C{0.2} C{0.2}C{0.2} C{0.2} C{0.2} C{0.2} C{0.2} @{}} 
Methods & \textsc{essl}  &\textsc{ess}-min & \textsc{msjd} & \textsc{essl}/s &  \textsc{ess}-min/s & \textsc{msjd}/s & Time & \textsc{ar} \\
\midrule
 \textsc{rwm} & 865.27 & 964.13 & 33.46 & 46.42 & 51.73 &1.80 & 18.64 & 0.20 \\ 
 \textsc{pcn} & 1128.64 & 890.82 & 35.00 & 60.38 & 47.66 & 1.87 & 18.69 & 0.37 \\ 
    \textsc{mpcn} & 3159.22 & 1783.71 & 76.93 & 164.08 & 92.64 & 4.00 & 19.25 & 0.34 \\ 
    \textsc{gmpcn} & 3747.38 & 1975.54 & 100.17 & 160.97 & 84.86 & 4.30 & 23.28 & 0.28 \\ 
    $\infty$-\textsc{hmc}& 17861.48 & 14655.91 & 2105.46 & 465.30 & 381.80 & 54.84 & 38.39 & 0.63 \\ 
    \textsc{hh} & 11104.22 & 18234.27 & 3415.83 & 18.21 & 29.90 & 5.60 & 609.77 & 0.52/0.23 \\ 
    \textsc{wm} & 1772.11 & 8027.00 & 3206.11 & 46.07 & 208.66 & 83.34 & 38.47 & 0.62 \\ 
    \textsc{hwm} & 22531.37 & 18314.68 & 3871.59 & 581.06 & 472.31 & 99.83 & 38.78 & 0.59 \\ 


\end{tabularx}\label{table:3}
\end{table}

\subsection{Discrete observation of the stochastic diffusion process}
\label{5.1.1}
Finally, we consider statistical inference for the stochastic process driven by the Wiener process. 
The motivation for this model is to study a scenario with a very complicated high-dimensional target distribution. In this case, the gradient-based methods are likely to be less effective. 

Let $k, d$ and $N$ be positive integers, $T$ be a positive number, and let $\alpha\in\mathbb{R}^k$. 
Suppose $(X_t)_{t\in [0,T]}$ is a solution process of a stochastic differential equation
$$
\dif X_t=a(X_t,\alpha)~\dif t+b(X_t)~\dif W_t; X_0=x_0
$$
where $(W_t)_{t\in [0,T]}$ is the $d$-dimensional standard Wiener process and $a:\mathbb{R}^d\times\mathbb{R}^k\rightarrow \mathbb{R}^d$ and 
$b:\mathbb{R}^d\rightarrow \mathbb{R}^{d\times d}$ are the drift and diffusion coefficients, respectively. We have only one discrete observation from the path $(X_t)_{t\in [0,T]}$. 
For simplicity, we assume equally time spaced observations $X_0, X_h, X_{2h},\ldots, X_{Nh}$, where $h=T/N$. 

In general, the likelihood is not available for the solution process of the stochastic differential equation. The local Gaussian approximation has been extensively studied in the past, such as \cite{Prakasa_Rao_1983,MR999016,Florens_zmirou_1989,YOSHIDA1992220}. 
We follow this approach and consider a Bayesian inference using an approximated likelihood. 

For our experiment, we set $d=50$, $N = 10^2$, and $T=5$. 
For simplicity, we consider the case of a constant diffusion coefficient, i.e. $b\equiv 1$. The drift coefficient is
$$
a(x,\alpha)=-\frac{1}{2}\nabla V(x-\alpha);
$$
where $V(x)= 27.5\log(1+x^{\top}\Sigma^{-1}x)$. 
The symmetric positive definite matrix $\Sigma$ is generated from the Wishart distribution with $50$ degrees of freedom and the identity matrix as the scale matrix. The prior distribution of the parameter $\alpha$ is the multivariate Student $t$-distribution $\mathcal{T}_{50}(3, 0,10I_{50})$.

As mentioned earlier, gradient evaluation is expensive. Therefore, the non-gradient based methods show good performance, with the exception of the random-walk Metropolis kernel.  In particular, \textsc{mpcn} shows the best performance for the effective sample size per second. The \textsc{hwm} is the best at the mean square jump distance thanks to the Haar motion. The \textsc{wm} kernel and $\infty$-\textsc{hmc} have similar performance for effective sample size of log likelihood, but the \textsc{wm} performs better for other statistics per second. 


\renewcommand{\tabcolsep}{3pt} 
\begin{table}
\caption{The performance of the stocahstic differential equation}
\begin{tabularx}{\textwidth}{@{} C{0.2} C{0.2} C{0.2} C{0.2}C{0.2} C{0.2} C{0.2} C{0.2} C{0.2} @{}} 
Methods & \textsc{essl}  &\textsc{ess}-min & \textsc{msjd} & \textsc{essl}/s &  \textsc{ess}-min/s & \textsc{msjd}/s & Time & \textsc{ar} \\
\hline
\textsc{rwm}& 2034.12 & 2799.71 & 310.46 & 6.49 & 8.94 & 0.99 & 313.18 & 0.31 \\ 
  \textsc{pcn}& 12897.57 & 12785.26 & 850.10 & 40.99 & 40.63 & 2.70 & 314.68 & 0.43 \\ 
  \textsc{mpcn}& 16548.68 & 21805.46 & 11392.07 & 52.12 & 68.67 & 35.88 & 317.53 & 0.30 \\ 
  \textsc{gmpcn}& 23184.34 & 22797.25 & 11445.11 & 71.48 & 70.28 & 35.28 & 324.37 & 0.30 \\ 
  $\infty$-\textsc{hmc}& 15252.22 & 13420.56 & 3383.18 & 16.25 & 14.30 & 3.60 & 938.74 & 0.69 \\ 
  \textsc{hh}& 10061.25&29574.45 &12093.45 &3.93&11.56& 4.73 & 2557.61& 0.52/0.20 \\ 
  \textsc{wm}& 14260.80 & 45389.57 & 26949.13 & 15.54 & 49.47 & 29.37 & 917.47 & 0.64 \\ 
  \textsc{hwm}& 44554.04 & 61870.96 & 41298.56 & 48.31 & 67.09 & 44.78 & 922.19 & 0.63 \\ 

\end{tabularx}
\end{table}

\section{Discussion}

We introduced a new algorithm, the Weave-Metropolis kernel, which is based on the Weave transform and combines circular and bounce transforms. The Haar motion lifts the kernel more efficiently, especially for heavy-tailed target distributions. The Weave transform has similarities with Hamiltonian flow. The former does not change the potential energy as much (see Remark \ref{rem:property_of_the_limit}) and the latter does not change the Hamiltonian. However, the former can be combined with the Haar motion, while the latter cannot. 
 The results of the numerical experiment show that the new method is efficient and robust due to the local property of the transform and the global property of the Haar motion. 

We would like to remark about parameter tuning. The effects of the number of iterations per step $L$ for the Weave kernels are different from the Hamiltonian Monte Carlo kernels. For the Hamiltonian Monte Carlo kernel, a large $L$ usually leads to a large mean square jump distance and a small acceptance probability. However, due to the circular transform, the effect is non-monotonic for the Weave kernels. To obtain the best performance, $L$ and also the step size $h$ should be carefully tuned. 
Second, it is advisable to randomise the step size of the circular transform, since the transform may be periodic. This does not mean that the Markov kernel is reducible, since $v$ is refreshed at each iteration. However, this periodicity can slow down convergence. Similar phenomena can occur in the Hamiltonian Monte Carlo method \citep[see p. 127 in][]{brooks2011}.

Finally, we would like to discuss possible extensions of the results.  One approach is to combine various transforms with the Haar measure. For example, in \cite{kamatani2020} they studied the Beta-Gamma Haar mixture on $\mathbb{R}_+^d$, which can be combined with gradient information. 
Also, due to the similarity with the Hamiltonian Monte Carlo kernel, we can use the same or similar techniques that can improve the performance of the Hamiltonian Monte Carlo kernel, e.g., parallel computation in \cite{yang2018}, delayed acceptance in \cite{park2020markov}. This technique can further improve performance.

\subsection*{Acknowledgement}

KK and XS were supported by JST CREST Grant number JPMJCR14D7. KK was supported by JSPS KAKENHI Grant number 20H04149 and 21K18589. XS was supported by the Ichikawa International Scholarship Foundation.

\appendix

\section{Algorithms}
\begin{algorithm}[H]
	\caption{Weave-Metropolis} 
		 For target distribution
		 $\Pi(\dif x)=\exp(-U(x))\mu(\dif x)$ where $\mu=\mathcal{N}_d(M,\Sigma)$. \\
		 \hspace*{\algorithmicindent}
		 \textbf{Input}: current state $x$, pre-conditional matrix $\Sigma$, step number $L$, step size $h$
       
	\begin{algorithmic}[1]
        \State Generate $v \sim \mu$ 
        \State
        $(x_0,v_0)=(x,v)$
	\For {$l=1,2,\ldots L$}
	    \State $(x_{l-\frac{1}{2}},v_{l-\frac{1}{2}})=\Phi_{\Circle}(x_{l-1},v_{l-1}\mid M)$   \Comment{circle transform}
		\State $(x_{l-\frac{1}{2}},{v}_{l-\frac{1}{2}}^*)=\Phi_{\Bounce}(x_{l-\frac{1}{2}},v_{l-\frac{1}{2}}\mid M,\Sigma)$ \Comment{bounce transform}
		\State $(x_l,v_l)=\Phi_{\Circle}(x_{l-\frac{1}{2}},v_{l-\frac{1}{2}}^*\mid M)$   \Comment{circle transform}
		\EndFor 
		\State Generate $w \sim \mathcal{U}[0,1]$
		\State Compute the acceptance rate 
	$\alpha(x,x_L)=\min\{1,\exp(-U(x_L)+U(x_0))\}
				$
		\If{$w\ge \alpha(x, x_L)$}
		
            $x_{new}= x_L$
        \Else
            \State $x_{new}= x$
        \EndIf
	\end{algorithmic} 
	\hspace*{\algorithmicindent} \textbf{Output}: new state $x_{new}$ 
	\label{al:wm}
\end{algorithm}

\begin{algorithm}[H]
	\caption{Haar-Weave-Metropolis}
	The target distribution is
		 $\Pi(\dif x)=\exp(-U(x))\mu_*(\dif x)$, where $\mu_*(\dif x)=(\Delta x)^{-d/2}\dif x$ with $\Delta x=(x-M)^{\top}\Sigma^{-1}(x-M)$. \\
		 \hspace*{\algorithmicindent} \textbf{Input}: current state $x$, pre-conditional matrix $\Sigma$, step number $L$, step size $h$
       
	\begin{algorithmic}[1]
	    \State Generate $g \sim \mathcal{G}(d/2,\Delta x/2)$, $v \sim \mathcal{N}_d(M,g^{-1}~\Sigma)$  \Comment{Haar motion}
        \State
         $(x_0,v_0)=(x,v)$
	\For {iteration $l=1,2,\ldots L$}
	    \State $(x_{l-\frac{1}{2}},v_{l-\frac{1}{2}})=\Phi_{\mathrm{circle}}(x_{l-1},v_{l-1}\mid M)$   \Comment{circle transform}
		\State $(x_{l-\frac{1}{2}},{v}_{l-\frac{1}{2}}^*)=\Phi_{\mathrm{bounce}}(x_{l-\frac{1}{2}},v_{l-\frac{1}{2}}\mid M,\Sigma)$ \Comment{bounce transform}
		\State $(x_l,v_l)=\Phi_{\mathrm{circle}}(x_{l-\frac{1}{2}},v_{l-\frac{1}{2}}^*\mid M)$   \Comment{circle transform}
		\EndFor
		\State Generate $w \sim \mathcal{U}[0,1]$
		\State Compute the acceptance rate 
	$\alpha(x,x_L)=\min\{1,\exp(-U(x_L)+U(x_0))\}
				$
		\If{$w\ge \alpha(x,x_L)$}
		
            $x_{new}= x_L$
        \Else
            \State $x_{new}= x_0$
        \EndIf
	\end{algorithmic} 
	\hspace*{\algorithmicindent} \textbf{Output}: new state $x_{new}$ 
	\label{al:hwm}
\end{algorithm}

\bibliographystyle{apalike}

\end{document}